\author{\Large\textbf{Artur Czumaj} \hspace{4mm} \textbf{Peter Davies} \\[0.10in]
Department of Computer Science \\
Centre for Discrete Mathematics and its Applications %(DIMAP)
\\
University of Warwick}
\title{\textbf{Exploiting Spontaneous Transmissions for Broadcasting and Leader Election in Radio Networks}
\thanks{Research partially supported by the Centre for Discrete Mathematics and its Applications (DIMAP).}
\thanks{Contact information: \{A.Czumaj, P.W.Davies\}@warwick.ac.uk. Phone: +44 24 7657 3796.}
\\[0.25in]
}
\date{}
\newtheorem{theorem}{Theorem}
\newtheorem{lemma}[theorem]{Lemma}
\newtheorem{claim}[theorem]{Claim}
\newcommand{\junk}[1]{{}}
\newcommand{\prob}[1]{\mathbb{P}[#1]}
\newcommand{\Prob}[1]{\mathbb{P}\left[#1\right]}
\newcommand{\expect}[1]{\mathbb{E}[#1]}
\newcommand{\net}{\ensuremath{\mathfrak{N}}\xspace}
\newcommand{\xx}{\ensuremath{\bm{x}}}
\begin{document}
\begin{titlepage}
\clearpage\maketitle\thispagestyle{empty}

\begin{abstract}
We study two fundamental communication primitives: \emph{broadcasting} and \emph{leader election} in the classical model of multi-hop radio networks with unknown topology and without collision detection mechanisms.
It has been known for almost 20 years that in undirected networks with $n$ nodes and diameter $D$, randomized broadcasting requires $\Omega(D \log\tfrac{n}{D} + \log^2n)$ rounds in expectation, assuming that uninformed nodes are not allowed to communicate (until they are informed). Only very recently, Haeupler and Wajc (PODC'2016) showed that this bound can be slightly improved for the model with spontaneous transmissions, providing an $O(D\frac{\log n\log\log n}{\log D} + \log^{O(1)}n)$-time broadcasting algorithm. In this paper, we give a new and faster algorithm that completes broadcasting in $O(D\frac{\log n}{\log D} + \log^{O(1)}n)$ time, with high probability. This yields the first optimal $O(D)$-time broadcasting algorithm whenever $D$ is polynomial in $n$.

Furthermore, our approach can be applied to design a new leader election algorithm that matches the performance of our broadcasting algorithm. Previously, all fast randomized leader election algorithms have been using broadcasting as their subroutine and their complexity have been asymptotically strictly bigger than the complexity of broadcasting. In particular, the fastest previously known randomized leader election algorithm of Ghaffari and Haeupler (SODA'2013) requires $O(D \log\frac{n}{D} \min\{\log\log n, \log\frac nD\} +\log^{O(1)}n)$-time with high probability. Our new algorithm requires $O(D\frac{\log n}{\log D} + \log^{O(1)}n)$ time with high probability, and it achieves the optimal $O(D)$ time whenever $D$ is polynomial in $n$.
\end{abstract}

\bigskip
\bigskip
\bigskip
\bigskip

\end{titlepage}

%---------------------------------------------------------------------------------------------------------------------------------------------------------

\section{Introduction}

%---------------------------------------------------------------------------------------------------------------------------------------------------------

\subsection{Model of communication networks}
\label{subsec:model}

We consider the classical model of \emph{ad-hoc radio networks} with \emph{unknown structure}. A \emph{radio network} is modeled by an \emph{undirected} network $\net = (V,E)$, where the set of nodes corresponds to the set of transmitter-receiver stations. An edge $\{v,u\} \in E$ means that node $v$ can send a message directly to node $u$ and vice versa. To make propagation of information feasible, we assume that
%every node in $V$ is reachable in \net from any other.
\net is connected.

In accordance with the standard model of unknown (ad-hoc) radio networks (for more elaborate discussion about the model, see, e.g., \cite{-ABLP91,-BGI92,-CGGPR00,-CGR00,-CMS03,-GH13,-GHK13,-HW16,-KP03,-KM98,-Pel07}), we make the assumption that a node does not have any  prior knowledge about the topology of the network, its in-degree and out-degree, or the set of its neighbors. We assume that the only knowledge of each node is the \emph{size} of the network $n$ and the \emph{diameter} of the network $D$.

Nodes operate in discrete, synchronous time steps. When we refer to the ``running time'' of an  algorithm, we mean the number of time steps which elapse before completion (i.e., we are not concerned with the number of calculations nodes perform within time steps). In each time step a node can either \emph{transmit} a message to all of its out-neighbors at once or can remain silent and \emph{listen} to the messages from its in-neighbors. We do not make any restriction on the size of messages, though the algorithms we present can easily be made to operate under the condition of $O(\log n)$-bit transmissions.

A further important feature of the model considered in this paper is that it allows \emph{spontaneous transmissions}, that is, any node can transmit if it so wishes. In some prior works (see, e.g., \cite{-CGGPR00,-CR06,-KM98}), it has been assumed (typically for the broadcasting problem) that uninformed nodes are not allowed to communicate (until they are informed). While this assumption can be of interest for the broadcasting problem, it is meaningless for the leader election problem, and so, throughout this paper we will allow spontaneous transmissions.

The distinguishing feature of radio networks is the interfering behavior of transmissions. In the most standard radio networks model, the \emph{model without collision detection}  (see, e.g., \cite{-ABLP91,-BGI92,-CMS03,-Pel07}), which is studied in this paper, if a node $v$ listens in a given round and precisely one of its in-neighbors transmits, then $v$ receives the message. In all other cases $v$ receives nothing; in particular, the lack of collision detection means that $v$ is unable to distinguish between zero of its in-neighbors transmitting and more than one.

The model without collision detection describes the most restrictive interfering behavior of transmissions; also considered in the literature is a less restrictive variant, the model with collision detection, where a node listening in a given round can distinguish between zero of its in-neighbors transmitting and more than one (see, e.g., \cite{-GHK13,-Pel07}).

%---------------------------------------------------------------------------------------------------------------------------------------------------------

\subsection{Key communications primitives: Broadcasting, leader election, and \textsc{Compete}}

In this paper we consider two fundamental communications primitives, namely \emph{broadcasting} and \emph{leader election}. We present randomized algorithms that perform these tasks \emph{with high probability} (i.e., $1-n^{-c}$ for an arbitrary large constant $c$), and analyze worst-case running time.

\emph{Broadcasting} is one of the most fundamental problems in communication networks and has been extensively studied for many decades (see, e.g., \cite{-Pel07} and the references therein).
The premise of the broadcasting task is that one particular node, called the \emph{source}, has a message which must become known to all other nodes. As such, broadcasting is one of the most basic means of global communication in a network.

\emph{Leader Election} is another most fundamental problems in communication networks that aims to ensure that all nodes agree on such a designated leader. Specifically, at the conclusion of a leader election algorithm, all nodes should output the same node ID, and precisely one node should identify this ID as its own. Leader election is a fundamental primitive in distributed computations and, as the most basic means of breaking symmetry within radio networks, it is used as a preliminary step in many more complex communication tasks. For example, many fast multi-message communication protocols require construction of a breadth-first search tree (or some similar variant), which in turn requires a single node to act as source (for more examples, cf. \cite{-CKP12,-GH13}, and the references therein).

To design efficient algorithms for broadcasting and leader election, we will be studying an auxiliary problem that we call \textsc{Compete}. \textsc{Compete} has a similar flavor to broadcasting, but instead of transmitting a single message from a single source to all nodes in the network, it takes as its input a source set $S \subseteq V$, in which every source $s \in S$ has a message (of integer value) it wishes to propagate, and guarantees that upon completion all nodes in \net know the highest-valued source message.

It is easy to see how the \textsc{Compete} process generalizes broadcasting: it is simply invoked with the source as the only member of the set $S$. To perform leader election, one can probabilistically generate a small set (e.g., of size $\Omega(\log n)$) of candidate leaders, and then perform \textsc{Compete} using this set, with $ID$s as the messages to be propagated. Therefore, to design an efficient randomized broadcasting and leader election algorithms, it is sufficient to design a fast randomized algorithm for \textsc{Compete} (cf. Section \ref{sec:broadcasting-and-leader-election}).

%---------------------------------------------------------------------------------------------------------------------------------------------------------

\subsection{Previous work}

As a fundamental communications primitive, the task of \emph{broadcasting} has been extensively studied for various network models, see, e.g., \cite{-Pel07} and the references therein.

For the model studied in this paper, undirected radio networks with unknown structure and without collision detection, the first non-trivial major result was due to Bar-Yehuda et al.\ \cite{-BGI92}, who, in a seminal paper, designed an almost optimal randomized broadcasting algorithm achieving the running time of $O((D + \log n) \cdot \log n)$ with high probability. This bound was later improved by Czumaj and Rytter \cite{-CR06}, and independently Kowalski and Pelc \cite{-KP03b}, who gave randomized broadcasting algorithms that complete the task in $O(D \log \frac{n}{D} + \log^2 n)$ time with high probability. Importantly, all these algorithms were assuming that nodes are not allowed to transmit spontaneously, i.e., they must wait to receive the source message before they can begin to participate. Indeed, for the model with no spontaneous transmissions allowed, it has been known that any randomized broadcasting algorithm requires $\Omega(D \log \frac{n}{D} + \log^2 n)$ time \cite{-ABLP91,-KM98}.
Only very recently, Haeupler and Wajc \cite{-HW16} demonstrated that allowing spontaneous transmissions can lead to faster broadcasting algorithms, by designing a randomized algorithm that completes broadcasting in $O(D \frac{\log n \log \log n}{\log D} + \log^{O(1)}n)$ time, with high probability. This is the only algorithm (that we are aware of) that beats the lower bound of $\Omega(D \log \frac{n}{D} + \log^2 n)$ \cite{-ABLP91,-KM98} in the model with no spontaneous transmissions. Given that for the model that allows spontaneous transmissions any broadcasting algorithm requires $\Omega(D + \log^2)$ time (cf. \cite{-ABLP91,-Pel07}), the algorithm due to Haeupler and Wajc \cite{-HW16} is \emph{almost optimal} (up to an $O(\log\log n)$ factor) whenever $D$ is polynomial in $n$.

Broadcasting has been also studied in various related models, including directed networks, deterministic broadcasting protocols, models with collision detection, and models in which the entire network structure is known. For example, in the model with collision detection, an $O(D+\log^6 n)$-time randomized algorithm due to Ghaffari et al. \cite{-GHK13} is the first to exploit collisions and surpass the algorithms %(and lower bound) 
for broadcasting without collision detection. For deterministic protocols, the best results are an $O(n\log D \log\log D)$-time algorithm in directed networks \cite{-CD16a}, and an $O(n\log D)$-time algorithm in undirected networks \cite{-K05}.

For more details about broadcasting in various model, see, e.g., \cite{-Pel07} and the references therein.

The problem of \emph{leader election} has also been extensively studied in the distributed computing community for several decades. For the model considered in this paper, it is known that a simple reduction (see, e.g., \cite{-BGI91}), involving performing a network-wide binary search for the highest ID using broadcasting as a subroutine every step, requires $O(T_{BC}\log n)$ time. Here $T_{BC}$ is time taken to perform broadcasting (provided the broadcasting algorithm used can be extended to work from multiple sources). This yields leader election randomized algorithms taking time $O(D\log\frac{n}{D}\log n+\log^3 n)$ using the broadcasting algorithms of \cite{-CR06,-KP03b}, or $O(D\frac{\log^2 n \log\log n}{\log D}+\log^{O(1)} n)$ using the broadcasting algorithm of \cite{-HW16}. This approach has been improved only very recently by Ghaffari and Haeupler \cite{-GH13}, who took a more complex approach to achieve an $O(D\log\frac{n}{D}+\log^3 n) \cdot \min\{\log\log n,\log\frac nD\}$ time algorithm based on growing clusters within the network.
%
%In fact, one can combine the algorithm by Ghaffari and Haeupler \cite{-GH13} with the broadcasting algorithm due to Haeupler and Wajc \cite{-HW16} to obtain an $O(D\frac{\log n \log\log n}{\log D} \cdot \min\{\log\log n,\log\frac nD\} + \log^{O(1)} n)$-time leader election algorithm.
%\Artur{Peter --- please check if it's indeed the case.}\Peter{It may be possible, but it's not immediate - you'd need some non-trivial way of dealing with collisions}\Artur{OK, so I changed it back accordingly. Please check that it makes sense. (Though still, I've always thought that \cite{-GH13} got leader election essentially in time $O(T_{BC} \min\{\log\log n, \log\frac{n}{D}\} + \log^{O(1)}n)$.)}
%
Notice that in the regime of large $D$ being polynomial in $n$, when $D \approx n^c$ for a constant $c$, $0 < c < 1$, the fastest leader election algorithm achieves the (high probability) running time of $O(D \log n \log\log n)$.

Leader election has also been studied in various related settings. For example, one can achieve $O(T_{BC})$ expected (rather than worst case) running time \cite{-CD16b}, or time $O(T_{BC} \sqrt{\log n})$ with high probability even for directed networks \cite{-CD16b}, and deterministically time $O(n\log n \log D \log\log D )$ \cite{-CD16a} or $O(n \log^{3/2}n \sqrt{\log\log n})$~\cite{-CKP12}.

%THIS SHOULD STAY IN FULL VERSION AND IN arxive
\junk{
Leader election has also been studied in various related settings. For example, there exists a protocol which achieves $O(T_{BC})$ expected (rather than worst case) running time \cite{-CD16b}, which achieves time $O(T_{BC} \sqrt{\log n})$ with high probability even for directed networks \cite{-CD16b}, and deterministic protocols taking time $O(n\log n \log D \log\log D )$ (using the reduction method with the deterministic broadcasting algorithm of \cite{-CD16a}) or $O(n \log^{3/2}n \sqrt{\log\log n})$ \cite{-CKP12}.
}

%---------------------------------------------------------------------------------------------------------------------------------------------------------

\subsection{New results}

In this paper we extend the approach recently developed by Haeupler and Wajc \cite{-HW16} to design a fast randomized algorithm for \textsc{Compete}, running in time $O(\frac{D\log n}{\log D}+|S|D^{0.125} +\log^{O(1)}n)$, with high probability (Theorem \ref{thm:compete}). By applying this algorithm to the broadcasting problem (Theorem \ref{thm:broadcasting}) and to the leader election problem (Theorem \ref{thm:leader-election}), we obtain randomized algorithms for both these problems running in time $O(D \frac{\log n}{\log D} + \log^{O(1)}n)$, with high probability. For $D = \Omega(\log^cn)$ for a sufficiently large constant $c$, these running time bounds improve the fastest previous algorithms for broadcasting and leader election by factors $O(\log\log n)$ and $O(\log n \log\log n)$, respectively. More importantly, whenever $D$ is polynomial in $n$ (i.e., $D = \Omega(n^c)$, for some positive constant $c$), this running time is $O(D)$, which is optimal since $D$ time is required for any information to traverse the network.

%We improve upon the results of \cite{-HW16}, first reducing running time for broadcasting from $O(D \frac{\log n \log \log n}{\log D} + \log^{O(1)}n)$ to $O(D \frac{\log n}{\log D} + \log^{O(1)}n)$, and then showing how the method can be extended to also complete leader election at no extra time cost. Whenever $D$ is polynomial in $n$ (i.e., $D=\Omega(n^c)$, for some constant $c>0$), this running time is $O(D)$, which is optimal since $D$ time is required for any information to traverse the network.

Our algorithms are the first to achieve optimality over this range of parameters, and are also the first instance (in our model) of leader election time being equal to fastest broadcasting time, since the former is usually a harder task in radio network models.

Finally, even though the current lower bounds for the randomized broadcasting and leader election problems are $\Omega(D + \log^2n)$, we would not be surprised if our upper bounds $O(D \frac{\log n}{\log D} + \log^{O(1)}n)$ were tight for $D = \Omega(\log^c n)$ for some sufficiently large constant $c$.

\emph{Note:} We assume throughout that $D = \Omega(\log^c n)$ for some sufficiently large constant $c$. If this is not the case, then the $O(D\log\frac nD + \log^2 n )$-time algorithm of \cite{-CR06,-KP03b} should be used instead.

%---------------------------------------------------------------------------------------------------------------------------------------------------------

\section{Approach}

Our approach to study \textsc{Compete} (and hence also broadcasting and leader election problems) follows the methodology recently applied for fast distributed communication primitives by Ghaffari, Haeupler, and others (see, e.g., \cite{-GH13,-HW16}). In order to solve the problem, we split computations into three parts. First, all nodes in the network will try to build some basic information about local structure, and will locally create some clustering of the network. Then, using this clustering, the nodes will perform some computations within each cluster, so that all nodes in the cluster share some useful knowledge. Finally, the knowledge from the clusters will be utilized to efficiently perform global communication.

%---------------------------------------------------------------------------------------------------------------------------------------------------------

\subsection{Clusterings, \textsc{Partition}, and schedulings}

To implement this approach efficiently, we follow a similar line to that of Haeupler and Wajc \cite{-HW16} and rely on a clustering procedure of Miller et al. \cite{-MPX13}, adapted for the radio network model.

\begin{lemma}[Lemma 3.1 of \cite{-HW16}]
\label{lem:clust1}
Let $0 < \beta \le 1$. Any network on $n$ nodes can be partitioned into clusters with strong diameter $O(\frac{\log n}{\beta})$ each with high probability, and every edge cut by this partition with probability $O(\beta)$. This algorithm can be implemented in the radio network setting in $O(\frac{\log^3 n}{\beta})$ rounds.
\end{lemma}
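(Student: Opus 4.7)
The plan is to invoke the Miller--Peng--Xu (MPX) clustering algorithm and adapt it to the radio network setting, following the standard blueprint. First, each node $v$ independently samples a shift $\delta_v$ from an exponential distribution with parameter $\beta$, and is assigned to the cluster centered at the node $u$ minimizing $d(u,v)-\delta_u$ (with ties broken by ID). Since this is exactly the MPX rule, the two structural guarantees come ``for free'' from the standard MPX analysis: (i) by the tail bound on exponential random variables and a union bound, $\max_v \delta_v = O(\log n/\beta)$ w.h.p., and because $v$ always has the option of being absorbed into $u$'s cluster only when $d(u,v)-\delta_u \le -\delta_v$, we get $d(u,v)\le \delta_u = O(\log n/\beta)$, yielding the stated strong-diameter bound; (ii) the memorylessness of the exponential distribution implies that for any edge $\{v,v'\}$, the probability that the second-smallest value of $d(u,\cdot)-\delta_u$ is within the gap of $1$ from the smallest is $O(\beta)$, and this event dominates the probability of the edge being cut.

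The more delicate step, and the one that actually requires work, is the \emph{implementation} in the radio model without collision detection. The natural way to realize the MPX assignment is to perform a ``shifted BFS'': we truncate all shifts at $\Theta(\log n/\beta)$, and then for $T=O(\log n/\beta)$ rounds of a simulated BFS, each node $v$ maintains the current best candidate pair $(u,d(u,v))$ and forwards it to its neighbors, who update their own candidate and their cluster assignment once their ``activation time'' $\lceil \delta_{\max} - \delta_u + d(u,v)\rceil$ is reached. The whole shifted BFS uses only $T = O(\log n/\beta)$ logical rounds.

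The main obstacle is simulating each logical BFS round in a network without collision detection: a node that wants to transmit its current candidate to its neighbors cannot simply transmit, because its neighbors may have many other simultaneous transmitters. I would resolve this using the standard BGI-style decay/random-delay scheduling: in each logical BFS round every active node executes $O(\log^2 n)$ radio steps, in which it transmits with probability $2^{-i}$ for $i = 0, 1, \ldots, O(\log n)$, repeated $O(\log n)$ times. A standard argument shows that every intended receiver with at least one transmitting in-neighbor gets a message w.h.p., so after these $O(\log^2 n)$ steps each logical round of the shifted BFS has been faithfully realized. Multiplying the $T=O(\log n/\beta)$ logical rounds by $O(\log^2 n)$ radio rounds each gives the $O(\log^3 n/\beta)$ round complexity.

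A final bookkeeping step is to note that the structural guarantees (strong diameter, edge cut probability) depend only on the assignment computed by the shifted BFS, not on \emph{how} it was computed, so the radio-model simulation inherits them verbatim once we condition on the high-probability event that every logical round is realized correctly. Taking a union bound over $n$ nodes and $O(\log^3 n/\beta)$ radio rounds fits within the claimed w.h.p.\ guarantee, completing the proof.
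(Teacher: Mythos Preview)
The paper does not prove this lemma at all: it is quoted verbatim as Lemma~3.1 of \cite{-HW16} and used as a black box, with the appendix merely recalling the MPX rule (each node maximizes $\delta_u - dist(u,v)$) and referring the reader to \cite{-HW16} for the radio-network implementation. Your sketch is a faithful reconstruction of that standard argument and is consistent with the description the paper gives, so there is nothing to compare against here beyond noting that your outline matches the cited source's approach.
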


The network being partitioned into clusters means that each node identifies one particular node as its \emph{cluster center}, the subgraph of nodes identifying any particular node as their cluster center is connected, and any node which is a cluster center to anyone must be cluster center to itself. Here the term \emph{``strong diameter''} refers to diameter using only edges within the relevant cluster.

The clustering provided by the application of Lemma \ref{lem:clust1} will be denoted by $\textsc{Partition}(\beta)$.

This framework will be used in our central result, Theorem \ref{thm:cprop} (proven in Appendix \ref{sec:clustering-property}), which states that, upon applying $\textsc{Partition}(\beta)$ with $\beta$ randomly chosen from some range polynomial in $D$, with constant probability the expected distance from some fixed node to its cluster center is $O(\frac{\log n}{\beta\log D})$.

\begin{restatable}{theorem}{cprop}
\label{thm:cprop}
Let $j$ be an integer chosen uniformly at random between $0.01\log D$ and $0.1\log D$, and let $\beta = 2^{-j}$. For any node $v$, with probability at least $0.55$ (over choice of $j$), the expected distance from $v$ to its cluster center upon applying $\textsc{Partition}(\beta)$ is $O(\frac{\log n}{\beta\log D})$.
\end{restatable}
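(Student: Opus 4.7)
The plan is to compute $R(\beta) := \mathbb{E}[d(v, c_\beta(v))]$, the expected distance from $v$ to its cluster center under $\textsc{Partition}(\beta)$, in closed form in terms of a ``partition function'' of $v$'s distance profile, identify $\beta R(\beta)$ as a (scaled) logarithmic derivative of this partition function, and then use a telescoping integral plus Markov's inequality to bound $\beta R(\beta)$ for most choices of $j$.

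First I would recall that in the MPX procedure underlying $\textsc{Partition}(\beta)$, each node $u$ independently samples a shift $\delta_u \sim \mathrm{Exp}(\beta)$ and $v$'s cluster center is $c_\beta(v) = \arg\max_u(\delta_u - d(v,u))$ (the node whose flood reaches $v$ first). A direct calculation using the joint density of the $\delta_u$ gives
\[
\Pr[c_\beta(v) = u] \;=\; e^{-\beta d(v,u)} \int_0^1 \prod_{w \ne u}\!\left(1 - y\,e^{-\beta d(v,w)}\right) dy \;\le\; \frac{C\,e^{-\beta d(v,u)}}{Z(\beta)},
\]
where $Z(\beta) := \sum_w e^{-\beta d(v,w)}$ and the final inequality uses $1-x \le e^{-x}$ together with $Z(\beta) \ge 2$ (easily verified in our range via the geometric series along any length-$D$ path out of $v$, since $\beta \le D^{-0.01}$ and $D = \Omega(\log^c n)$). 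Summing $d(v,u)$ against this probability, and observing $\sum_u d(v,u) e^{-\beta d(v,u)} = -dZ/d\beta$, gives $R(\beta) \le -C\, d\log Z(\beta)/d\beta$, equivalently
\[
\beta R(\beta) \;\le\; -C\,\frac{d\log Z(\beta)}{d\log \beta}.
\]

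The right-hand side above is a total derivative with respect to $\log \beta$, so integrating over the range $\log_2 \beta \in [-0.1\log_2 D,\, -0.01\log_2 D]$ corresponding to the random $j$ telescopes:
\[
\int \beta R(\beta) \,d\log\beta \;\le\; C\bigl(\log Z(\beta_{\min}) - \log Z(\beta_{\max})\bigr) \;\le\; C\log n,
\]
using only the trivial bounds $1 \le Z(\beta) \le n$. Since the integration range has length $\Theta(\log D)$, the average value of $\beta R(\beta)$ over uniform $j$ is $O(\log n/\log D)$, and Markov's inequality then yields that with probability at least $0.55$ over $j$ we have $\beta R(\beta) = O(\log n/\log D)$, i.e., $R(\beta) = O(\log n/(\beta\log D))$, as claimed. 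The main technical hurdle is producing the clean closed-form upper bound on $\Pr[c_\beta(v)=u]$ (everything else reduces to elementary manipulation of $Z(\beta)$ and an averaging argument); a secondary subtlety is that the radio-network implementation of $\textsc{Partition}$ from Lemma~\ref{lem:clust1} may truncate the $\delta_u$ at $O(\log n/\beta)$ to enforce the strong-diameter guarantee, but any such truncation can only shrink $v$'s cluster and thus preserves the upper bound on $R(\beta)$.
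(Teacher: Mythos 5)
Your route is genuinely different from the paper's and, at its core, sound. The paper first proves (Lemma \ref{lem:edist}) that the expected distance is $O(S_{\xx,\beta})$ with $S_{\xx,\beta}=\frac{\sum_i i x_i e^{-i\beta}}{\sum_i x_i e^{-i\beta}}$, and then spends Sections \ref{subsec:clustering-property-simplification-of-x}--\ref{subsec:clustering-property-simplification-of-x-usage} on a combinatorial analysis of the distance profile: two normalizing transformations of $\xx$, a condition on the log-ratios $k_i$ of consecutive coefficients, and a counting argument showing at most $0.04\log D$ values of $j$ violate it. You instead observe that $S_{\xx,\beta}=-\frac{d}{d\beta}\log Z(\beta)$ with $Z=B_{\xx,\beta}$, so $\beta S_{\xx,\beta}$ is a logarithmic derivative in $\log\beta$; you telescope over the range of $j$ using $1\le Z\le n$ and finish with Markov over $j$. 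Your closed form for $\Pr[c_\beta(v)=u]$ (via the substitution $y=e^{-\beta(p-d(v,u))}$) is correct and in fact cleaner than the paper's $J+K$ split, and $Z(\beta)\ge\frac{1}{2\beta}\gg 2$ holds because $v$ has eccentricity at least $D/2$. The paper's constraint $\sum_i k_i\le\log n$ is the same telescoping in disguise, but your version is far more direct and would replace several pages of case analysis.

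The one step that does not follow as written is the passage from the integral bound to ``the average value of $\beta R(\beta)$ over uniform $j$ is $O(\log n/\log D)$.'' The theorem draws $j$ uniformly from the \emph{integers} in $[0.01\log D,0.1\log D]$, so you must bound the discrete average $\frac{1}{0.09\log D}\sum_j 2^{-j}R(2^{-j})$, and a bound on $\int h\,d\log\beta$ for $h=-d\log Z/d\log\beta$ does not by itself control the sum $\sum_j h(-j)$: $h$ can spike. For instance, if the distance profile has a huge concentration at one distance $i_0$, then $h$ has a spike of height $\Theta(\log n)$ and width $O(1)$ in $\log_2\beta$ near $\beta=\frac{\ln n}{i_0}$, so a priori the integer sample points could all land on such spikes while the integral stays $O(\log n)$. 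The fix is one line: $S_{\xx,\beta}$ is non-increasing in $\beta$ (its $\beta$-derivative is minus the variance of the exponentially tilted distance distribution, i.e.\ $\log Z$ is convex), hence $\log Z(\beta/2)-\log Z(\beta)=\int_{\beta/2}^{\beta}S_{\xx,s}\,ds\ge\frac{\beta}{2}S_{\xx,\beta}$, so the \emph{discrete} sum satisfies $\sum_j \beta_j S_{\xx,\beta_j}\le 2\log\frac{Z(\beta_{\min}/2)}{Z(\beta_{\max})}\le 2\log n$ by direct telescoping. With that patch your argument is complete and gives the theorem with the same constants.
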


This result applies to the clustering method %of \cite{-MPX13} 
in any setting, not just radio networks, and hence may well be of independent interest. It improves over \cite{-HW16} that expected distance to cluster center is $O(\frac{\log n\log\log n}{\beta\log D})$.

\noindent
This is combined with a means of communicating within clusters from \cite{-GHK13} using the notion of \emph{schedules}.

\begin{lemma}[Lemma 2.1 of \cite{-HW16}]
\label{lem:precomp}
A network of diameter D and n nodes can be preprocessed in $O(D \log^{O(1)} n)$ rounds, yielding a \textbf{schedule} which allows for one-to-all broadcast of k messages in $O(D +k \log n+\log^6 n)$ rounds with high probability. This schedule satisfies the following properties:
\begin{itemize}
\item For some prescribed node $r$, the schedule transmits messages to and from nodes at distance $\ell$ from $r$ in $O(\ell + \log^6 n)$ rounds with high probability.
\item The schedule is periodic of period $O(\log n)$: it can be thought of as restarting every $O(\log n)$ steps.
\end{itemize}
\end{lemma}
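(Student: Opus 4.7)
The plan is to produce the schedule in two phases: a preprocessing phase that discovers enough local structure to coordinate pipelined transmissions along a BFS tree rooted at $r$, and then a periodic runtime schedule that forwards messages hop-by-hop in an essentially collision-free way.

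For preprocessing, I would first invoke the $O(D\log(n/D)+\log^2 n)$ randomized broadcasting algorithm of Czumaj--Rytter / Kowalski--Pelc starting from $r$, carrying the message ``I am $r$, level $0$'' and incrementing a level counter as it propagates. Each node $v$ learns its BFS level $\ell(v)$ and selects a tree parent at level $\ell(v)-1$; this fits in the $O(D\log^{O(1)}n)$ preprocessing budget. Next, so that tree edges can be used without collisions, each node picks, for each period, a random slot from $\{1,\dots,c\log n\}$ in which it will transmit to its parent (for upward traffic), together with an analogous assignment in which parents transmit and children listen (for downward traffic). Using a standard decay / Luby-style analysis together with an $O(\log^{O(1)}n)$-repetition amplification, one can ensure that for every tree edge there is, in every period, at least one slot in which the intended transmission is the unique incoming one at the receiver, with probability $1-n^{-\Omega(1)}$. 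The preprocessing phase also verifies these slot choices and re-rolls any bad ones, still within $O(D\log^{O(1)}n)$ rounds.

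The resulting runtime schedule is a period-$P$ cycle with $P=\Theta(\log n)$: within each period every node listens to its parent in a fixed subset of slots and forwards any pending message in its assigned slot. Pipelining $k$ messages then proceeds as in the classical BFS pipeline. After an initial $O(\log^6 n)$ warm-up that absorbs the small probability of bad events in the schedule's collision-freeness (a union bound over the $\mathrm{poly}(n)$ edges and the $\mathrm{poly}(n)$ periods it must be used over is the origin of the $\log^6 n$ additive term), message $i$ reaches level $\ell$ within $O(\ell+i\log n+\log^6 n)$ rounds, so the last of $k$ messages reaches the furthest node by round $O(D+k\log n+\log^6 n)$. Taking $k=1$ (or $k=O(1)$) specializes to the distance-$\ell$ property: a single message travels $\ell$ levels in $O(\ell+\log^6 n)$ rounds, and the symmetric upward slot assignment handles messages returning toward $r$.

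The main obstacle is reconciling periodicity with a high-probability, per-edge correctness guarantee over an unbounded number of periods. A single random slot assignment succeeds on each edge only with constant probability per period, so to get w.h.p.\ guarantees globally over $\mathrm{poly}(n)$ periods one needs either $\Theta(\log n)$ independent trials per period combined with a union bound, or a small rotating collection of slot assignments. Keeping the overall period to $O(\log n)$ rather than, say, $O(\log^2 n)$ is the delicate step, and it is exactly the place where invoking the Ghaffari--Haeupler--Khanna schedule construction (the precise tool cited inside \cite{-HW16}) is cleanest: it provides a preprocessed, period-$O(\log n)$ decay-based schedule satisfying collision-freeness w.h.p., from which the pipelined bound $O(D+k\log n+\log^6 n)$ and the distance-$\ell$ refinement then follow by the routine pipelining argument sketched above.
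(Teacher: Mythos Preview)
This lemma is not proved in the paper at all: it is quoted verbatim as Lemma~2.1 of \cite{-HW16} (itself building on the schedule construction of \cite{-GHK13}) and used as a black box. So there is no ``paper's own proof'' to compare against; the authors simply import the result.

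Your sketch is a plausible high-level outline of how such a schedule is built (BFS layering from $r$, then a periodic randomized transmission schedule with pipelining), and you correctly identify at the end that the Ghaffari--Haeupler--Kuhn construction is the actual engine. However, several details are off or speculative. The additive $\log^6 n$ term is not a union-bound artifact over edges and periods as you suggest; it arises from the specifics of the gathering spanning tree and collision-free schedule in \cite{-GHK13}. Your slot-assignment-plus-rerolling idea does not by itself yield a \emph{fixed} periodic schedule that works w.h.p.\ over arbitrarily many periods while keeping period length $O(\log n)$; you acknowledge this obstacle but then defer to the cited construction rather than resolve it. Since the paper treats the lemma as an imported tool, the right move here is simply to cite \cite{-HW16,-GHK13} rather than attempt a self-contained proof.
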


Whenever we refer to computing or using \emph{schedules} during our algorithms, we mean using this method. We note that, as shown in Lemma 4.2 of \cite{-HW16}, we can perform this pre-processing in such a way that it succeeds with high probability despite collisions, at a multiplicative $O(\log^{O(1)} n)$ time cost.
%---------------------------------------------------------------------------------------------------------------------------------------------------------

\subsection{Algorithm structure}

The general approach then proceeds as follows: first there is a pre-processing phase, in which we partition the network using $\textsc{Partition}(\beta)$ from Lemma \ref{lem:clust1}, and compute schedules within the clusters using Lemma~\ref{lem:precomp}. Then we broadcast the message through the network using these computed schedules within clusters. Any shortest $(u,v)$-path $p$ crosses $O(|p|\beta)$ clusters in expectation, and communication within these clusters takes $O(\frac{\log n}{\beta\log D})$ expected time, so total time required should be $O(|p|\frac{\log n}{\log D}) = O(D\frac{\log n}{\log D})$.

Of course, this omits many of the technical details, and we encounter several difficulties when trying to implement the approach. Firstly, Theorem \ref{thm:cprop} only bounds expected distance to cluster center with constant probability. However, by generating many different clusterings, with different random values of $\beta$, and curtailing application of the schedules after $O(\frac{\log n}{\beta \log D})$ time, we can ensure that we do make sufficient progress with high probability. A second issue is that these values of $\beta$ must somehow be coordinated, which we solve by using an extra layer of ``coarse'' clusters, similarly to \cite{-HW16}. Thirdly, collisions can occur between nodes of different clusters during both precomputation and broadcasting phases. We take several measures to deal with these collisions in our algorithms and analysis.

%---------------------------------------------------------------------------------------------------------------------------------------------------------

\subsection{Advances over previous works}

The idea of performing some precomputation locally and then using this local knowledge to perform a global task, occurs frequently in distributed computing. In our setting, the most similar prior work is the $O(D\frac{\log n\log\log n}{\log D}+\log^{O(1)}n )$-time broadcasting algorithm of $\cite{-HW16}$. Here we summarize our main technical differences from that paper and other related works:
\begin{itemize}
\item It was known from \cite{-HW16} that when $\textsc{Partition}(\beta)$ is run with $1/\beta$ randomly selected from a range polynomial in $D$, the expected distance from a node to its cluster center is $O(\frac{\log n\log\log n}{\beta\log D})$. We improve this result with Theorem \ref{thm:cprop}, which states that with constant probability this distance is $O(\frac{\log n}{\beta\log D})$.
\item We demonstrate how, by switching clusterings frequently and curtailing their schedules after $O(\frac{\log n}{\beta\log D})$ time, we can improve the fastest time for broadcasting in radio networks.
\item We show that, with a different method of analysis and an algorithmic background process to deal with collisions, we can extend this method to also complete leader election, a task usually more difficult.
\end{itemize}

%---------------------------------------------------------------------------------------------------------------------------------------------------------

\section{Algorithm for \textsc{Compete}}

Since our broadcasting and leader election protocols require the same asymptotic running time and use similar methods (cf. Section \ref{sec:broadcasting-and-leader-election}), we can combine their workings into a single generalized procedure \textsc{Compete}.

\textsc{Compete} takes as input a source set $S$, in which every source $s\in S$ has a message it wishes to propagate, and guarantees, with high probability, that upon completion all nodes know the highest-valued source message. The process takes $O(D\frac{\log n}{\log D}+|S|D^{0.125}+\log^{O(1)}n)$ time (cf. Theorem \ref{thm:compete}), which is within the $O(D\frac{\log n}{\log D}+\log^{O(1)}n)$ time claimed for broadcasting and leader election, as long as $|S| \le D^{0.875}$.

%THIS SHOULD STAY IN FULL VERSION AND IN arxive
\junk{
As mentioned earlier, it is easy to see how this process generalizes broadcasting: it is simply invoked with the source as the only member of the set $S$. To perform leader election, one can probabilistically generate a set of $\Omega(\log n)$ candidate leaders, and then perform \textsc{Compete} using this set, with $ID$s as the messages to be propagated. Therefore proving the correctness and running time of \textsc{Compete} constitutes the bulk of the work we must do, and the broadcasting and leader election algorithms follow as easy corollaries.
}

Our efficient algorithm for \textsc{Compete} consists of two processes which run concurrently, alternating between steps of each. The main \textsc{Compete} process is designed to propagate messages quickly through most of the network, and the background process is slower, with the purpose of ``papering over the cracks'' in the main process; in this case that means passing messages across coarse cluster boundaries.

\begin{algorithm}[H]
\caption{\textsc{Compete$(S)$}}
\label{alg:C}
\small
\begin{algorithmic}
\State 1) Compute a \emph{coarse clustering} using $\textsc{Partition}(\beta)$ with $\beta = D^{-0.5}$.
\State 2) Compute a schedule within each coarse cluster.
\State 3) Within each coarse cluster, for each integer $j\in [0.01\log D,0.1\log D]$, compute $D^{0.2}$ different \emph{fine clusterings} using $\textsc{Partition}(\beta)$ with $\beta = 2^{-j}$.
\State 4) Compute schedules within all fine clusterings.
\State 5) Each coarse cluster center computes a $D^{0.99}$-length sequence of randomly chosen fine clusterings to use.
\State 6) Transmit this sequence within each coarse cluster, using the coarse cluster schedules.
\State 7) For each fine clustering in the sequence perform \textsc{Intra-Cluster Propagation$(O(\frac{\log n}{\beta\log D}))$} (with the value of $\beta$ corresponding to the fine clustering).
\end{algorithmic}
\end{algorithm}	

In the main process, we first compute a \emph{coarse clustering}, that is, one with comparatively large clusters, which we need to spread shared randomness. Then, within the coarse clusters we compute many different \emph{fine clusterings}, i.e., sub-clusterings with smaller clusters. These are the clusterings we will use to propagate information through the network. The coarse clusters generate and transmit a random sequence of these fine clusterings, which tells their members in what order to use the fine clusterings for this propagation (this was the sole purpose of the coarse clustering). We show that, when applying \textsc{Intra-Cluster Propagation}$(O(\frac{\log n}{\beta\log D}))$ on a clustering with $\beta$ randomly chosen, we have a constant probability of making sufficient progress towards our goal of information propagation. We can treat the progress made during each application of \textsc{Intra-Cluster Propagation} as being independent, since we use a different random clustering each time (and with high probability, whenever we choose a clustering we have used before, we have made sufficient progress in between so that the clusters we are analyzing are far apart and behave independently). Therefore we can use a Chernoff bound to show that with high probability we make sufficient progress throughout the algorithm as a whole.

An issue with the main process, though, is that at the boundaries of the coarse clustering, collisions between coarse clusters can cause  \textsc{Intra-Cluster Propagation} to fail. To rectify this, \emph{we interleave steps of the main process with steps of a background process} (Algorithm \ref{alg:C-B}), e.g., by performing the main process during even time-steps and the background process during odd time-steps.

\begin{algorithm}[H]
\caption{\textsc{Compete$(S)$ - Background Process}}
\label{alg:C-B}
\small
\begin{algorithmic}
\State 1) Compute $D^{0.2}$ different fine clusterings using $\textsc{Partition}(\beta)$ with $\beta = D^{-0.1}$.
\State 2) Compute a schedule within each cluster, for each clustering.
\State 3) Cycling through clusterings in round-robin order, perform \textsc{Intra-Cluster Propagation$(O(\frac{\log n}{\beta}))$}.
\end{algorithmic}
\end{algorithm}	

The background process is simpler: it follows a similar line to the main process, but does not use a coarse clustering, only fine clusterings. This means that we do not have the shared randomness we use in the main process, so we cannot choose $\beta$ randomly (we instead fix $\beta = D^{-0.1}$) and we cannot use a random ordering of fine clusterings (we instead use a round-robin order). As a result, we must run \textsc{Intra-Cluster Propagation} for longer to achieve a constant probability of making good progress, and so the propagation of information is slower (if we were to rely on the background process alone, we would only achieve $O(D\log n + \log^{O(1)} n)$ time).

However, the upside is that there are no coarse cluster boundaries, and so the progress is made consistently throughout the network. Therefore, we can analyze the progress of our algorithm using the faster main process most of the time, and switching to analysis of the background process when the main process reaches a coarse cluster boundary. Since the coarse clusters are comparatively large, their boundaries are reached infrequently, and so we can show that overall the algorithm still makes progress quickly.

Both \textsc{Compete} processes make use of \textsc{Intra-Cluster Propagation} as a primitive, which makes use of the computed clusters and schedule to propagate information. Specifically, the procedure facilitates communication between the cluster center and nodes within $\ell$ hops.

\begin{algorithm}[H]
\caption{\textsc{Intra-Cluster Propagation$(\ell)$}}
\label{alg:ICP}
\small
\begin{algorithmic}
\State 1) Broadcast the highest message known by the cluster center to all nodes within $\ell$ distance.
\State 2) All such nodes which know a higher message participate in a broadcast towards the cluster center.
\State 3) Broadcast the highest message known by the cluster center to all nodes within $\ell$ distance.
\end{algorithmic}
\end{algorithm}	

Here we apply Lemma \ref{lem:precomp}: after computing schedules, it is possible to broadcast between the cluster center and nodes at distance at most $\ell$ in time $O(\ell+\log^{O(1)}n)$. That is, on an outward broadcast all nodes within distance $\ell$ of the cluster center hear its message, and on an inward broadcast the cluster center hears the message of at least one participating node. This would be sufficient in isolation, but since we perform \textsc{Intra-Cluster Propagation} within all fine clusters at the same time, we will describe a background process (Algorithm \ref{alg:ICP-B}) to deal with collisions between fine clusters in the same coarse cluster. As before, we intersperse the steps of the main process and background process, performing one step of each alternately.

\begin{algorithm}[H]
\caption{\textsc{Intra-Cluster Propagation$(\ell)$ - Background Process}}
\label{alg:ICP-B}
\small
\begin{algorithmic}
\State Repeat until main process is complete:
\For {$i = 1$ to $\log n$}
	 \State with probability $2^{-i}$ (coordinated in each cluster) perform one round of \textsc{Decay};
    \State otherwise remain silent for $\log n$ steps.
\EndFor
\end{algorithmic}
\end{algorithm}	

The background process aims to individually inform nodes that border other fine clusters, and therefore may have collisions that prevent them from participating properly in the main process. The goal is to ensure that eventually (we will bound the amount of time that we may have to wait), such a node's cluster will be the only neighboring cluster to perform \textsc{Decay} (Algorithm \ref{alg:D}), which ensures that the node will then hear its cluster's message (with constant probability).

%THIS SHOULD STAY IN FULL VERSION AND IN arxive
\junk{
The \textsc{Decay} protocol, first introduced by Bar-Yehuda et al. \cite{-BGI92}, is a fundamental primitive employed by many randomized radio network communication algorithms. Its aim is to ensure that, if a node has one or more in-neighbors which wish to transmit a message, it will hear at least one of them.
}
The \textsc{Decay} protocol, first introduced by Bar-Yehuda et al. \cite{-BGI92}, is a fundamental transmission primitive employed by many randomized radio network communication algorithms.

\begin{algorithm}[h]
\caption{\textsc{Decay} at a node $v$}
\label{alg:D}
\small
\begin{algorithmic}
	\State \textbf{for} $i = 1$ \textbf{to} $\log n$, in time step $i$ \textbf{do}: $v$ transmits its message with probability $2^{-i}$
\end{algorithmic}
\end{algorithm}
%
%THIS SHOULD STAY IN FULL VERSION AND IN arxive
\junk{
The following lemma (cf. \cite{-BGI92}) describes a basic property of \textsc{Decay}, which gives us the desired behavior:
}
\begin{lemma}[\cite{-BGI92}]
\label{corollary:Decay}
After a round of \textsc{Decay}, a node $v$ with at least one participating in-neighbor receives a message with constant probability.
\qed
\end{lemma}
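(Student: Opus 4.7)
The plan is to exploit the fact that one of the $\log n$ probability scales used by \textsc{Decay} will be well matched to the (unknown) number of participating in-neighbors of $v$. Let $k$ denote the number of in-neighbors of $v$ that participate in this \textsc{Decay} round; by assumption $k \ge 1$, and since each such in-neighbor is a node of the network, $k \le n$. The goal is to show that the probability of exactly one in-neighbor transmitting in some particular step of \textsc{Decay} is bounded below by a universal constant.

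I would let $i^\star \in \{1,2,\ldots,\lceil \log n\rceil\}$ be the unique integer with $2^{i^\star-1} \le k < 2^{i^\star}$, so that the transmission probability $p := 2^{-i^\star}$ used in step $i^\star$ satisfies $\tfrac{1}{2k} < p \le \tfrac{1}{k}$. In that step, each of the $k$ participating in-neighbors transmits independently with probability $p$, so the probability that exactly one transmits (and hence that $v$ receives a message in that step) is
\[
k \, p\, (1-p)^{k-1}.
\]
I would then bound this from below: $kp \ge \tfrac{1}{2}$ by the choice of $i^\star$, and $(1-p)^{k-1} \ge (1-1/k)^{k-1} \ge e^{-1}$ for $k \ge 2$ (with the case $k=1$ handled separately by observing that the lone in-neighbor transmits with probability $p = 1/2$ in step $1$). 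Combining these gives a lower bound of $\tfrac{1}{2e}$ on the probability that $v$ hears a message in step $i^\star$ alone, and hence the probability over the whole round is at least this constant.

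The only mild subtlety is the edge cases: when $k=1$ the bound $(1-p)^{k-1}=1$ is trivial, and when $k$ is very close to $n$ we must confirm that $i^\star$ still lies in the range $\{1,\ldots,\lceil \log n\rceil\}$ actually used by the algorithm, which follows directly from $k \le n$. There is no real obstacle here; the argument is entirely elementary and only requires one to verify that the geometric spacing of probabilities in \textsc{Decay} guarantees that some step has its expected number of transmitters in the constant range $[\tfrac{1}{2},1]$, at which point the standard Poisson-type computation yields a constant success probability.
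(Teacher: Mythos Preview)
Your argument is correct and is precisely the standard elementary proof of this fact. The paper itself does not give a proof: the lemma is stated with a citation to \cite{-BGI92} and an immediate \qedsymbol, so there is nothing to compare against beyond noting that your argument is essentially the original one from Bar-Yehuda, Goldreich, and Itai.
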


%---------------------------------------------------------------------------------------------------------------------------------------------------------

\section{Analysis of \textsc{Compete} Algorithm}
\label{sec:aca}

In this section we prove the following guarantee on the behavior of \textsc{Compete}:

\begin{restatable}{theorem}{compete}
\label{thm:compete}
\textsc{Compete}$(S)$ informs all nodes of the highest message in $S$ within $O(\frac{D\log n}{\log D}+|S|D^{0.125} +\log^{O(1)}n)$ time-steps, with high probability.
\end{restatable}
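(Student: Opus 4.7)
The plan is to argue, for an arbitrarily fixed target node $v$, that $v$ learns the globally maximum message $m^\star$ of $S$ within the stated budget; the full claim then follows by a union bound over all $v\in V$. Pick any source $s^\star\in S$ holding $m^\star$ and a shortest $(s^\star,v)$-path $p$ in \net. Let $x_t$ be the farthest node on $p$ that knows $m^\star$ after $t$ iterations of the outer loop (step~7 of Algorithm~\ref{alg:C}); the aim is to show $x_t=v$ within the allotted time, with high probability.

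Call iteration $t$ \emph{successful} if $x_t$ advances along $p$ by at least $\Omega(\log n/(\beta_t\log D))$, where $\beta_t$ denotes the scale of the fine clustering used in iteration $t$. By the schedule property of Lemma~\ref{lem:precomp} and the ``down-up-down'' structure of \textsc{Intra-Cluster Propagation}, iteration $t$ is successful whenever three conditions hold at $x_t$: (i) $x_t$'s cluster center in the chosen fine clustering lies within distance $\ell=O(\log n/(\beta_t\log D))$ of $x_t$; (ii) the radius-$\ell$ ball around that center lies entirely inside the coarse cluster containing $x_t$; and (iii) no persistent collision from an adjacent coarse cluster blocks the schedule. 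Theorem~\ref{thm:cprop} together with Markov's inequality gives that, under the uniform draw of $\beta_t$ from the log-scale range of Algorithm~\ref{alg:C}, condition (i) holds with some absolute constant probability $p^\star>0$. Because the coarse cluster draws each fine clustering freshly from the pool of size $D^{0.2}\cdot\Theta(\log D)$, the success indicators can be coupled with independent $\mathrm{Bernoulli}(p^\star)$ trials up to a polynomially small failure budget for clustering reuses. A Chernoff bound over the iterations then shows that, with high probability, enough successful iterations accumulate so that the total advance covers $|p|\le D$ within $O(D\log n/\log D+\log^{O(1)}n)$ steps of main-process time.

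The main obstacle is handling conditions (ii) and (iii), both of which concern coarse-cluster boundaries. Because the coarse clustering is built with $\beta=D^{-0.5}$, Lemma~\ref{lem:clust1} and a Chernoff bound ensure that at most $O(\sqrt{D}\log n)$ coarse-cluster edges appear on $p$ with high probability, and the main process may stall at each of them. The background process of Algorithm~\ref{alg:ICP-B} is designed to unstick such crossings: Lemma~\ref{corollary:Decay} and the round-robin \textsc{Decay} schedule guarantee that each boundary edge is crossed in $O(D^{0.1}\log^{O(1)}n)$ time with constant probability per attempt, so $O(\log n)$ attempts suffice per boundary with high probability. Summing gives a total background cost of $O(D^{0.6}\log^{O(1)}n)$, which is absorbed into $O(D\log n/\log D)$ under the standing assumption $D\ge\log^c n$ for a sufficiently large constant $c$. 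The additive $|S|D^{0.125}$ summand accounts for the $k$-message overhead in Lemma~\ref{lem:precomp}: when several source messages from $S$ meet at a common fine-cluster center the extra cost is $O(k\log n)$, and a careful accounting along $p$ shows that the total contribution is $O(|S|D^{0.125})$, since only the first arrival of each strictly larger message at each fine-clustering scale contributes. Assembling the three bounds and taking a union bound over $v\in V$ and over the $O(\log n)$ high-probability events invoked (cluster-diameter bounds, sequence distinctness, schedule correctness, boundary-count concentration, and the Chernoff estimate for successful iterations) yields Theorem~\ref{thm:compete}.
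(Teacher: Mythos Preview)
Your proposal has a genuine gap that concerns the very reason the $|S|D^{0.125}$ term appears. You track only the highest message $m^\star$ and argue that the frontier $x_t$ advances along $p$ in each successful iteration. But \textsc{Intra-Cluster Propagation} does not guarantee this. In step~2 of Algorithm~\ref{alg:ICP}, the convergecast toward the cluster center ensures only that the center hears \emph{some} participating node's message (cf.\ the guarantee of Lemma~\ref{lem:precomp}), not the maximum one. So if the cluster containing $x_t$ also contains nodes holding intermediate messages from $S$, the center may pick up one of those instead of $m^\star$, and step~3 then broadcasts that intermediate message. Your frontier $x_t$ need not move at all in such an iteration, and your Chernoff argument over ``successful'' iterations collapses.

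The paper handles this by ranking the messages of $S$ and proving, via a double induction on the rank $i$ and on path length (Lemma~\ref{lem:asp}), that after the stated time every node knows a message of rank at least $i$. Crucially, Lemmas~\ref{lem:BS} and~\ref{lem:GS} only guarantee that the rank along a subpath increases by one, and they require a precondition on the entire $D^{0.11}$-neighborhood of the subpath already knowing rank $\ge j$. Each rank increment costs an additive $\Theta(D^{0.125})$ of slack (to cover the extra $D^{0.11}$ in reaching the neighborhood and the $D^{0.12}$ or $D^{0.121}$ cost of one application of Lemma~\ref{lem:GS} or~\ref{lem:BS}), and since $i$ ranges up to $|S|$ this is precisely where the $|S|D^{0.125}$ term comes from. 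Your explanation of that term as the ``$k$-message overhead'' in Lemma~\ref{lem:precomp} is incorrect: the algorithm never broadcasts multiple messages through a schedule; every broadcast is of a single message. Relatedly, your boundary analysis counts coarse-cluster edges on $p$, whereas the paper partitions $p$ into length-$D^{0.12}$ subpaths and bounds the number of \emph{bad} subpaths (those whose $D^{0.11}$-neighborhood crosses a coarse boundary); this neighborhood condition is what the rank-based lemmas actually need, and your argument does not establish it.
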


The precomputation phase of \textsc{Compete}, that is, steps 1--6 of the main process and steps 1-2 of the background process, requires $O(D^{0.99} \log^{O(1)}n) = O(D)$ time, and upon its completion we have all the schedules required to perform \textsc{Intra-Cluster Propagation}. As in \cite{-HW16}, we can ignore collisions during these precomputation steps, since we can simulate each transmission step with $O(\log n)$ rounds of \textsc{Decay} to ensure their success without exceeding $O(D)$ total time.

We first prove a result that allows us to use \textsc{Intra-Cluster Propagation} to propagate messages through the network. During a fixed application of \textsc{Intra-Cluster Propagation}, we call a node \emph{valid} if it can correctly send and receive messages to/from its cluster center despite collisions between fine clusters.

\begin{lemma}
\label{lem:ICP}
For some constant $c$, upon applying \textsc{Intra-Cluster Propagation}$(\ell)$ with $\ell = D^{\Omega(1)}$, a fixed node $u$ at distance at most $\frac{\ell}{c}$ from its cluster center is valid with probability at least $0.99$.
\end{lemma}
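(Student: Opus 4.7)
The plan is to argue that, over the $O(\ell + \log^{O(1)}n)$ rounds of ICP$(\ell)$, node $u$ at distance $d \le \ell/c$ from its cluster center successfully performs all schedule-based communication with the center. By Lemma~\ref{lem:precomp}, the intra-cluster schedule accomplishes this in $O(\ell/c + \log^6 n)$ rounds absent inter-cluster collisions, so the main task is to establish that the background process ICP-B compensates for collisions caused by neighboring fine clusters running their own schedules concurrently.

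I would fix the schedule path $p$ from the center to $u$ (with $|p| \le \ell/c$) and call a node in $p$ a \emph{boundary node} if it has an in-neighbor in a different fine cluster. Inter-cluster collisions can only disrupt the schedule at boundary nodes, and there are at most $\ell/c = D^{\Omega(1)}$ of them on $p$. For each boundary node $v$, ICP-B provides recovery: at iteration $i$ each cluster independently performs one round of \textsc{Decay} with probability $2^{-i}$, so by choosing $i$ to match the effective number $k$ of nearby interfering clusters, with probability $\Omega(1/k)$ only $v$'s own cluster decays among those reaching $v$, and Lemma~\ref{corollary:Decay} then ensures $v$ hears its cluster's current message with constant probability during that Decay round.

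Each cycle of ICP-B through $i = 1, \ldots, \log n$ takes $O(\log^2 n)$ rounds, so ICP$(\ell)$ admits $\Omega(\ell / \log^2 n) = D^{\Omega(1)}$ independent cycles. This drives the per-boundary-node failure probability down to $\exp(-D^{\Omega(1)})$, and a union bound over the $D^{\Omega(1)}$ boundary nodes on $p$ and the three schedule phases of ICP keeps the total failure probability well below $0.01$, giving that $u$ is valid with probability at least $0.99$.

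The main obstacle is the interplay between the schedule-based outward broadcast, the inward aggregation, and the asynchronous recovery of boundary nodes by the background process: a boundary node can only relay a correct message once it has itself received one, which depends on upstream boundary nodes having been resolved. I would address this by induction walking outward from the cluster center---each boundary node is resolved by the background process before the next needs its message---and the slack between the $O(\ell + \log^{O(1)}n)$ total runtime of ICP$(\ell)$ and the $O(\ell/c + \log^6 n)$ required by the schedule in isolation provides enough additional time for all boundary nodes on $p$ to be resolved in sequence.
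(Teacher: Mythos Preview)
Your proposal has a genuine gap: you bound the number of boundary nodes on the path $p$ only by its length $\ell/c$, and then try to resolve them sequentially within the $O(\ell)$ slack of ICP$(\ell)$. But each boundary node requires at least one full background cycle, i.e.\ $\Omega(\log^2 n)$ rounds, to be informed (and $\Theta(k\log^2 n)$ in expectation if it borders $k$ clusters). Since, as you yourself note, the resolutions must happen in order along $p$, the total time to clear up to $\ell/c$ boundary nodes is $\Omega((\ell/c)\log^2 n)$, which already exceeds the $O(\ell)$ runtime of ICP$(\ell)$. Your ``$\exp(-D^{\Omega(1)})$ per boundary node'' calculation implicitly gives every boundary node access to all $\Omega(\ell/\log^2 n)$ cycles, which is exactly what the sequential dependency you flag in the last paragraph rules out; the slack argument you then sketch does not recover this, because the slack is only $O(\ell)$, not $O(\ell\,\mathrm{polylog}\,n)$.

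The paper closes this gap with an ingredient your proposal is missing: the clustering guarantee (from \cite{-HW16}) that any fixed node lies on a fine-cluster boundary with probability $O(\beta)$. Hence the \emph{expected} number of risky nodes on the length-$d$ path is $O(d\beta)$, not $O(d)$. Combined with the bound that every node borders $O(\log n)$ clusters w.h.p., the expected total delay contributed by risky nodes is $O(d\beta)\cdot O(\log^3 n)=O(d)$, since $\beta\le D^{-0.01}$ swallows the polylog factor. The paper then gets an $O(d+\log^{O(1)}n)$ bound on the expected communication time and finishes with Markov's inequality by taking $c$ large enough. So the key missing step is invoking the $O(\beta)$ edge-cut probability of $\textsc{Partition}(\beta)$ to show there are few risky nodes in expectation; without it the sequential-resolution budget does not close.
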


\begin{proof}
Let $u$ be a node at distance $d$ from its cluster center, and call nodes on the shortest path from $u$ to the cluster center who border another fine cluster \emph{risky}. We make use of a result of \cite{-HW16} (a corollary of Lemma 3.6 used during proof of Lemma 4.6) which states that any node is risky with probability $O(\beta)$. Therefore the expected number of risky nodes on the path is $O(d\beta)$.

Let $v$ be a risky node bordering $q$ fine clusters, and consider how long $v$ must wait to be informed if it has a neighbor in its own cluster who wishes to inform it. Whenever $2^{-i}$ is within a constant factor of $\frac1q$ during the background process, \textsc{Decay} has $\Omega(\frac1q)$ probability of informing $v$ from its own cluster. This is because with probability $\Omega(\frac1q)$, $v$'s cluster is the only cluster bordering $v$ to perform \textsc{Decay}, and in this case $v$ is informed with constant probability. Since this value of $2^{-i}$ recurs every $O(\log^2 n)$ steps, the time needed to inform $v$ is $O(q\log^2 n)$ in expectation.

We use another result from \cite{-HW16}, Corollary 3.9, which states that with high probability all nodes border $O(\frac{\log n}{\log D}) = O(\log n)$ clusters. Therefore the total amount of time spent informing risky nodes is $O(d\beta \cdot \log^3 n) = O(d)$ in expectation, and since $O(d+\log^{O(1)} n)$ time is required to inform non-risky nodes using the main process, $u$ can communicate with its cluster center in $O(d+\log^{O(1)} n)$ expected time. By choosing sufficiently large $c$, by Markov's inequality $v$ is valid with probability at least $0.99$.
\end{proof}

This will allow us to use \textsc{Intra-Cluster Propagation} to propagate information locally. To make a global argument, we will analyze the \textsc{Compete} algorithm's progress along paths by partitioning said paths into length $D^{0.12}$ subpaths. We call the set of all nodes within $D^{0.11}$ of a subpath its \emph{neighborhood}, and we call a subpath \emph{good} if all nodes in its neighborhood are in the same coarse cluster (and \emph{bad} otherwise). We will show that we pass messages along good subpaths quickly under the main \textsc{Compete} process, and along bad subpaths more slowly under the background process.

To show that there are not too many bad subpaths, we make use of the following result from \cite{-HW16}:

\begin{lemma}[Corollary 3.8 of \cite{-HW16}]
After running $\textsc{Partition}(\beta)$ the probability of a fixed node u having nodes from t distinct clusters at distance d or less from u is at most $(1-e^{-\beta(2d+1)})^{t-1}$.
\qed
\end{lemma}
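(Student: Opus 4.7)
My approach hinges on the exponential-shift representation underlying the MPX clustering procedure invoked by $\textsc{Partition}(\beta)$: each potential center $w$ carries an i.i.d.\ shift $\delta_w \sim \mathrm{Exp}(\beta)$, and a node $v$ is assigned to the cluster of $w^{*} = \arg\min_w (d(v,w) - \delta_w)$. Setting $Y_w := d(u,w) - \delta_w$ and $Y_{\min} := \min_w Y_w$, the first step is to translate the statement into one about the $Y_w$'s: if any $v$ with $d(u,v) \le d$ lies in $w$'s cluster then $d(v,w) - \delta_w \le d(v,w') - \delta_{w'}$ for every $w'$, and two applications of the triangle inequality ($d(v,w) \ge d(u,w) - d$ and $d(v,w') \le d(u,w') + d$) give $Y_w \le Y_{\min} + 2d$. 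So the number of distinct clusters meeting the ball of radius $d$ around $u$ is at most $|\{w : Y_w - Y_{\min} \le 2d + 1\}|$ (the extra $+1$ absorbs an integer-rounding technicality of the radio-network setting), and it suffices to prove, for any $s \ge 0$ and $t \ge 1$,
\[
P\bigl(|\{w : Y_w - Y_{\min} \le s\}| \ge t\bigr) \le (1 - e^{-\beta s})^{t-1},
\]
and then substitute $s = 2d+1$.

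The plan for this inequality is to anchor on the i.i.d.\ case, where all the $c_w := d(u,w)$ are equal and the $Y_w$'s are i.i.d.\ shifted exponentials. There, by the classical decomposition of exponential order-statistic gaps, $Y_{(t)} - Y_{(1)}$ is distributed as a sum of independent $\mathrm{Exp}(k\beta)$'s for $k = 1, \ldots, t-1$, and a short telescoping calculation gives $P(Y_{(t)} - Y_{(1)} \le s) = (1 - e^{-\beta s})^{t-1}$ \emph{exactly}, so the bound is tight. Intuitively, for unequal $c_w$'s the $Y_w$'s get further spread out, so $t$-fold near-coincidences can only become rarer; the goal is to convert this intuition into a proof that the i.i.d.\ case is indeed the worst case.

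The hard part is formalizing ``i.i.d.\ is worst.'' My preferred route is induction on $t$: condition on the argmin $w^{*}$ and its value $y^{*} = Y_{w^{*}}$; under this conditioning the remaining $\delta_w$'s are independent truncated exponentials on $[0, c_w - y^{*}]$, and an explicit product formula for $P(Y_{(t)} - Y_{(1)} \le s \mid w^{*}, y^{*})$ drops out. The subtlety I expect to be the main obstacle is that individual factors of this product can behave badly when some $c_w$ is close to $y^{*} + s$---such a $w$ is essentially forced into the window conditional on not being the minimum---so a naive pointwise comparison to $1 - e^{-\beta s}$ fails. I would resolve this either by integrating over the joint law of $(w^{*}, y^{*})$ to show that these ``forced-close'' $w$'s are exactly the ones most likely to have been the minimum already (so they do not inflate the tail after averaging), or, more cleanly, via a Poisson-process representation: place a rate-$\beta$ Poisson process on $\mathbb{R} \times [n]$ and let $Y_w$ be the largest point with label $w$ whose first coordinate is at most $c_w$; the labels whose ``records'' fall in the top-$s$ window are then governed by a Poisson thinning argument insensitive to the $c_w$'s, yielding the bound $(1 - e^{-\beta s})^{t-1}$ directly.
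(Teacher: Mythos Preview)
The paper does not prove this lemma at all: it is quoted as Corollary~3.8 of Haeupler--Wajc and closed with a bare \qed, so there is no in-paper argument to compare against. Your proposal thus goes well beyond what the present paper does, and your reduction is sound: the triangle-inequality step correctly shows that any $w$ whose cluster meets the radius-$d$ ball about $u$ satisfies $Y_w \le Y_{\min} + 2d$, so it suffices to bound $\Pr\bigl(Y_{(t)} - Y_{(1)} \le s\bigr)$ (the $+1$ in the stated exponent only weakens the bound). Your i.i.d.\ computation via the R\'enyi representation of exponential order-statistic gaps is also correct and gives $(1-e^{-\beta s})^{t-1}$ exactly.

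Where you make life harder than necessary is in extending to unequal shifts $c_w = d(u,w)$. Conditioning on the \emph{minimum} $Y_{(1)}$ creates precisely the truncated-exponential difficulty you flag: a $w$ with $c_w$ just below $Y_{(1)}+s$ is forced into the window. Both of your proposed fixes (integrating out the law of $(w^*,y^*)$, or a Poisson-process coupling) can be pushed through, but neither is needed. The clean argument conditions instead at the level $Y_{(t)}$. Given $Y_{(t)} = y$ and the identities of the $t-1$ nodes $w$ with $Y_w < y$, each such $w$ satisfies $\delta_w > c_w - y$; by memorylessness when $c_w \ge y$, and trivially when $c_w < y$ (since then $y - Y_w = (y - c_w) + \delta_w$ is a right-shifted exponential), the residual $y - Y_w$ stochastically dominates an $\mathrm{Exp}(\beta)$ variable, independently across the $t-1$ survivors. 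Hence $Y_{(t)} - Y_{(1)} = \max_w (y - Y_w)$ dominates the maximum of $t-1$ i.i.d.\ $\mathrm{Exp}(\beta)$'s, giving
\[
\Pr\bigl(Y_{(t)} - Y_{(1)} \le s\bigr) \;\le\; (1 - e^{-\beta s})^{t-1}
\]
directly, with no induction and no averaging. Your ``forced-close'' obstacle evaporates because conditioning at $Y_{(t)}$ rather than $Y_{(1)}$ places the truncation on the memoryless tail of the exponential rather than on its bounded head.
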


Therefore the probability of a node $u$ having nodes from two different coarse clusters within $D^{0.11}$ distance is at most $1-e^{-D^{-0.5}(2D^{0.11}+1)} \le 1-e^{-3D^{-0.39}} \le 3D^{-0.39}$. Taking the union bound over all nodes in a path, we find that any length-$D^{0.12}$ path is bad with probability at most $D^{0.12} \cdot 3D^{-0.39} \le D^{-0.26}$.

\begin{lemma}
\label{lem:badp}
All shortest paths $p$ between two vertices have $O(D^{0.63})$ bad subpaths, with high probability.
\end{lemma}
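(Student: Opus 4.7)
The plan is to prove concentration by exploiting the locality of the Miller--Peng--Xu clustering underlying $\textsc{Partition}(D^{-0.5})$. The main obstacle is that the badness events of different subpaths along $p$ are not a priori independent, since the whole coarse clustering is sampled jointly; the key move will be to partition the subpaths into $O(D^{0.38}\log n)$ residue classes along $p$ whose badness events \emph{are} independent, and then apply a Chernoff bound within each class.

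Since $p$ is a shortest path it has length at most $D$ and so decomposes into $M := \lceil D/D^{0.12}\rceil = O(D^{0.88})$ subpaths. Combined with the per-subpath failure bound $D^{-0.26}$ established just before the lemma, the expected number of bad subpaths along $p$ is at most $MD^{-0.26} = O(D^{0.62})$. For locality, I would use the fact that in the MPX procedure each node $v$ independently samples a shift $r_v \sim \mathrm{Exp}(\beta)$ with $\beta = D^{-0.5}$, and a node $u$'s cluster is determined entirely by the $r_v$ for $v$ within distance $\max_w r_w$ of $u$. A union bound over the $n$ exponentials gives $\max_w r_w \le C\sqrt{D}\log n$ with high probability for some constant $C$; I would condition on this event. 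Then the badness of a subpath, being a function of cluster assignments within its $D^{0.11}$-neighborhood, is measurable with respect to the $r_v$ for $v$ within $O(\sqrt{D}\log n)$ of the subpath.

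Next I would partition the subpaths of $p$ into $R := \Theta(D^{0.38}\log n)$ residue classes along $p$, where the $g$th class consists of every $R$th subpath starting at position $g$. Because $p$ is a shortest path, positions along $p$ coincide with network distances, so any two distinct subpaths in the same class are at network distance at least $R\cdot D^{0.12} = \Omega(\sqrt{D}\log n)$, which exceeds twice the locality radius. Under the conditioning their badness events thus depend on disjoint sets of $r_v$ and are genuinely independent. A class contains at most $m := \lceil M/R\rceil = O(D^{0.5}/\log n)$ subpaths with per-subpath failure probability at most $D^{-0.26}$, so its expected count $\mu_g$ is at most $D^{0.24}/\log n$. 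A standard Chernoff bound then gives $X_g \le \max(2\mu_g,\, C'\log n)$ with probability at least $1 - n^{-c'}$, for any desired constant $c'$ by tuning $C'$.

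Finally, I would take a union bound over the $R \le n$ classes, over the conditioning event on $\max_w r_w$, and over the $O(n^2)$ pairs of endpoints (one fixed shortest path per pair). On the intersection of the good events the total bad-subpath count along $p$ is at most $\sum_g \max(2\mu_g, C'\log n) \le 2D^{0.62} + O(D^{0.38}\log^2 n)$; under the paper's standing assumption $D = \Omega(\log^c n)$ for sufficiently large $c$ the second term is $o(D^{0.63})$, giving the claimed $O(D^{0.63})$ bound. I expect the only delicate step to be a careful verification that, after conditioning on $\max_w r_w$, the MPX assignment really does localize so that same-class subpath badnesses are exactly independent; everything else is routine Chernoff and union-bound bookkeeping.
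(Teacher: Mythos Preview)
Your proposal is correct and follows essentially the same approach as the paper: condition on all exponential shifts being at most $O(D^{0.5}\log n)$, use locality of the MPX clustering to deduce that subpaths at network distance $\Omega(D^{0.5}\log n)$ along the shortest path have independent badness, group the subpaths into $\Theta(D^{0.38}\log n)$ residue classes, apply Chernoff within each class, and union-bound over classes and over all $n^2$ shortest paths. The paper's writeup is slightly terser (it does not separate out the small-mean Chernoff case as you do with $\max(2\mu_g, C'\log n)$, and it simply inflates the final bound from $O(D^{0.62})$ to $O(D^{0.63})$ to absorb constants and get an arbitrarily small failure probability), but the structure of the argument is the same.
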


\begin{proof}
Fix some shortest path $p$. As in the proof of Lemma 4.3 of \cite{-HW16}, we first condition on the event that all exponentially distributed random variables $\delta_v$ used when computing the coarse clustering are $\le 2D^{0.5}\log n$, which is the case with high probability (for details of how the clustering algorithm works see Appendix \ref{sec:clustering-property}). Then, the events that two length-$D^{0.12}$ subpaths of distance at least $5D^{0.5}\log n$ apart are bad are independent, since they are not affected by any of the same $\delta_v$. If we label the length-$D^{0.12}$ subpaths of $p$ in order from one end of the path to the other, and group them by label mod $6D^{0.38}\log n$, then the badness of every subpath is independent from all the others in its group. Hence, the number of bad subpaths in each group is binomially distributed, and is $O(\frac{D}{D^{0.12} \cdot 6D^{0.38}\log n}\cdot D^{-0.26}) = O(D^{0.24})$ with high probability by a Chernoff bound. By the union bound over all of the groups, the total number of bad subpaths is $O(D^{0.62})$ with high probability. If we allow this amount to be as high as $O(D^{0.63})$, we can reduce the probability that we exceeds it to $n^{-c}$ for an arbitrarily large constant $c$. We can then take a union bound over all $n^2$ shortest paths, and find that they all have $O(D^{0.63})$ bad subpaths with high probability.
\end{proof}

Having bounded the number of bad subpaths, we can show we can pass messages along them using the background process, quickly enough that we do not exceed the algorithm's stated running time in total. Note that here, and henceforth, we will refer to messages by their place in increasing order out of all messages of nodes in $S$. That is, by \emph{message $j$} we mean the $j^{th}$ highest message in $S$.

\begin{lemma}[\textbf{\emph{Bad subpaths}}]
\label{lem:BS}
Let $p$ be any $(u,v)$-path of length at most $D^{0.12}$. Let $j$ be the minimum, over all nodes $v$ in $p$'s neighborhood, of the highest message known by $v$ at time-step $t$. If, at timestep $t$, $u$ knows a message higher than $j$, then by time-step $t' = t + O(D^{0.121})$ all nodes in $p$ know a message at least as high as $j+1$ with high probability.
\end{lemma}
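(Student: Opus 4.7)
The plan is to propagate a message of value at least $j+1$ along $p$ purely via the background \textsc{Compete} process (Algorithm \ref{alg:C-B}). That process cycles through $D^{0.2}$ independently generated fine clusterings with $\beta=D^{-0.1}$, applying \textsc{Intra-Cluster Propagation}$(\ell)$ with $\ell=\Theta(\log n/\beta)=\Theta(D^{0.1}\log n)$ on each, so every iteration takes $O(\ell+\log^{O(1)}n)=O(D^{0.1}\log n)$ time by Lemma \ref{lem:precomp}. Since I will only require $O(D^{0.02})\ll D^{0.2}$ iterations, no clustering is reused during the analysis, and the constant-factor slowdown coming from interleaving with the main \textsc{Compete} process and with the background of ICP can be folded into the asymptotic bound.

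The core step is a one-iteration analysis. Let $x_a$ denote the current \emph{frontier}, i.e., the largest-indexed node of $p$ holding a message of value $\geq j+1$; initially $a$ corresponds to $u$. Fix any one iteration: let $C$ be the fine cluster containing $x_a$ and let $x_{a'}$ be the farthest-along node of $p$ that also lies in $C$. Since $C$ has strong diameter $O(\log n/\beta)$ by Lemma \ref{lem:clust1}, both $x_a$ and $x_{a'}$ sit within $\ell/c$ of the cluster center for an appropriate constant, so Lemma \ref{lem:ICP} applied twice plus a union bound yields that with probability at least $0.98$ they are both valid in that iteration. In this good event, \textsc{Intra-Cluster Propagation} carries the message $\geq j+1$ from $x_a$ to the center of $C$ and then back out to $x_{a'}$, advancing the frontier from $a$ to at least $a'$.

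To lower-bound the advance $a'-a$, I will use Lemma \ref{lem:clust1}: each edge of $p$ is cut by the clustering with probability $O(\beta)$, so the number of path-edges walked forward from $x_a$ before meeting a cut stochastically dominates a geometric variable with mean $\Omega(1/\beta)=\Omega(D^{0.1})$; this gives $\mathbb{E}[a'-a]=\Omega(D^{0.1})$. Because the clusterings used in different iterations are mutually independent (we use each at most once) and the \textsc{Decay} randomness inside each ICP is fresh, the per-iteration advances are independent. A Chernoff/Hoeffding argument then shows that, with probability at least $1-n^{-\Omega(1)}$, after $k=\Theta(D^{0.02})$ iterations the total advance exceeds $|p|\leq D^{0.12}$, and at that point every node of $p$ has learned a message of value $\geq j+1$. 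The running time is $k\cdot O(D^{0.1}\log n)=O(D^{0.12}\log n)$, which is $O(D^{0.121})$ under the standing assumption $D=\Omega(\log^c n)$ for large $c$.

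The main obstacle is the per-iteration success step. Lemma \ref{lem:ICP} only bounds validity of a \emph{single} fixed node with probability $0.99$, so we cannot afford to insist that every path-node of $C$ be valid simultaneously; the crux is that it suffices to track exactly two nodes per iteration, $x_a$ and $x_{a'}$, because propagation through the cluster center links them directly. A secondary point is the independence needed for the Chernoff step, which is why we must keep the number of iterations below the supply of $D^{0.2}$ precomputed clusterings; this is automatic since $D^{0.02}\ll D^{0.2}$. All remaining $\log n$ factors are absorbed into the slightly loose $D^{0.121}$ exponent using the standing lower bound on $D$.
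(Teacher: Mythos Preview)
Your overall plan matches the paper's: analyze only the background process, track a frontier along $p$, argue constant-probability progress per clustering iteration, and conclude by concentration over $\Theta(D^{0.02})$ iterations. The genuine gap is in the per-iteration step. You define $x_{a'}$ as the farthest path-node lying in the same fine cluster as the frontier $x_a$ and then invoke Lemma~\ref{lem:ICP} to conclude that $x_{a'}$ is valid with probability $0.99$. But Lemma~\ref{lem:ICP} is stated and proved for a \emph{fixed} node; its proof relies on each node along the path to the center being risky with probability $O(\beta)$, a bound taken over the clustering randomness. Your $x_{a'}$ is determined by that same clustering randomness, so the lemma does not apply as a black box. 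Indeed, by construction the next path-edge after $x_{a'}$ leaves the cluster, so $x_{a'}$ is risky with probability $1$, not $O(\beta)$, and the conditional distribution of the rest of the clustering given the location of $x_{a'}$ is no longer the one Lemma~\ref{lem:ICP} analyzes. Relatedly, the claim that $a'-a$ ``stochastically dominates a geometric variable'' does not follow from the marginal $O(\beta)$ cut probability of Lemma~\ref{lem:clust1} without independence of edge cuts, which is not asserted there.

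The paper avoids both issues by choosing the target node \emph{deterministically}: take $x$ to be the node exactly $D^{0.1}/c$ hops along $p$ from the current frontier $w$. Now $w$ and $x$ are fixed relative to the current clustering's randomness, so Lemma~\ref{lem:ICP} applies cleanly to each, and Corollary~3.7 of \cite{-HW16} gives that $x$ is in the same cluster as $w$ with probability arbitrarily close to $1$ for large enough $c$. A ``good'' iteration then advances by a \emph{fixed} amount $D^{0.1}/c$, and one applies Chernoff to the number of good iterations (using that, conditioned on the entire history and hence on $w$, the current iteration is good with probability at least $\tfrac12$). This also cleans up a secondary sloppiness in your argument: your per-iteration advances are not independent (each depends on the frontier's position, which is a function of earlier iterations), so a straight Hoeffding claim needs a martingale justification; the paper's formulation with a uniform lower bound on the conditional success probability handles this adaptivity directly.
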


\begin{proof}
We analyze only the background process, and consider separately each fine clustering used in the sequence between time-steps $t$ and $t'$. For any such clustering, let $w$ be the furthest node along $p$ which knows a message at least as high as $j+1$. We call the clustering \emph{good} if:
\begin{itemize}
\item all nodes in $w$'s cluster are $O(D^{0.1}\log n)$ distance from the cluster center;
\item the node $x$ which is $\frac{D^{0.1}}{c}$ nodes along $p$ from $w$ is in its cluster as $w$;
\item $x$ and $w$ are valid (recall that this means they succeed in \textsc{Intra-Cluster Propagation}).
\end{itemize}

By Lemma \ref{lem:clust1} the first event occurs with high probability, by Corollary 3.7 of \cite{-HW16}, we can make the probability of the second event an arbitrarily high constant by our choice of $c$, and by Lemma \ref{lem:ICP} and the union bound, the third event occurs with probability at least $1-2(1-0.99) = 0.98$, conditioned on the first. Therefore the clustering is good with probability at least $\frac 12$, by applying the union bound again.

By a Chernoff bound, $\Omega(D^{0.02})$ of the clusterings applied between times $t$ and $t'$ will be good. Consider each good clustering in turn. After applying such a clustering, $w$'s cluster will be informed of an ID higher than $j$. Every time this occurs, $w$ advances at least $\frac{D^{0.1}}{c}$, and so by time $t'$ the entire path knows a message at least as high as $j+1$.
\end{proof}

We now make a similar argument for the good subpaths, but since we can use the main \textsc{Compete} process without fear of collisions from other coarse clusters, we get a better time bound:

\begin{lemma}[\textbf{\emph{Good subpaths}}]\label{lem:GS}
Let $p$ be any good $(u,v)$-path of length at most $D^{0.12}$. Let $j$ be the minimum, over all nodes $v$ within $D^{0.11}$ distance $p$, of the highest message known by $v$ at time-step $t$. If, at timestep $t$, $u$ knows a message higher than $j$, then by time-step $t' = t + O(D^{0.12}\frac{\log n}{\log D})$ all nodes in $p$ know a message at least as high as $j+1$ with high probability.
\end{lemma}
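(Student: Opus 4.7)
The plan is to mirror the proof of Lemma~\ref{lem:BS}, but to analyze the \emph{main} \textsc{Compete} process instead of the background process. Because $p$ is a good subpath, the whole $D^{0.11}$-neighborhood of $p$ lies in a single coarse cluster, so all nodes of $p$ share the same random sequence of fine clusterings prepared in step~5 of Algorithm~\ref{alg:C}. The key new ingredient is Theorem~\ref{thm:cprop}: for each fine clustering in the sequence, the random $j$ makes the expected distance from any fixed node to its cluster center $O(\frac{\log n}{\beta \log D})$ with constant probability, which matches (up to constants) the time budget $\ell = \Theta(\frac{\log n}{\beta \log D})$ used inside \textsc{Intra-Cluster Propagation}. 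This is precisely what will enable a faster running time than in Lemma~\ref{lem:BS}, where $\beta=D^{-0.1}$ is fixed and the available budget is the larger $\Theta(\frac{\log n}{\beta})$.

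First I would define the frontier $w$ as the node farthest along $p$ from $u$ that currently knows a message at least $j+1$; initially $w=u$. Then, for each fine clustering in the sequence used between times $t$ and $t'$, I would call the clustering \emph{good} if (a) $w$ is within distance $\ell/(2c)$ of its cluster center, (b) the node $x$ which is $\ell/(2c)$ hops farther along $p$ from $w$ lies in the same fine cluster as $w$, and (c) both $w$ and $x$ are valid in the sense of Lemma~\ref{lem:ICP}. Theorem~\ref{thm:cprop} combined with Markov's inequality (for $c$ large) yields (a) with some constant probability; since $\beta \cdot \ell/(2c) = O(1/(c\log D))$, Corollary~3.7 of \cite{-HW16} yields (b) with probability arbitrarily close to $1$ for $c$ large; Lemma~\ref{lem:ICP} with a union bound yields (c) with probability at least $0.98$. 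A union bound then makes each clustering good with some constant probability. When good, the triangle inequality places $x$ within $\ell/c$ of the common cluster center, so \textsc{Intra-Cluster Propagation} relays $w$'s message via the center to $x$, advancing the frontier by at least $\ell/(2c)$ positions along $p$.

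Next I would apply a Hoeffding/Chernoff-type concentration to the independent random advancement variables $X_i$, one per clustering, where $X_i$ has expectation $\Omega(\ell_i)$ for $\ell_i = \Theta(\frac{\log n}{\beta_i \log D})$. The $\ell_i$ add up over the relevant interval to $\Theta(T)$ for $T = O(D^{0.12}\frac{\log n}{\log D})$, so $\mathbb{E}[\sum_i X_i] = \Omega(T)$; since each $\ell_i$ is at most $D^{0.1}\frac{\log n}{\log D}$, which is a $D^{-0.02}$ fraction of $T$, concentration gives $\sum_i X_i = \Omega(T) \geq D^{0.12} = |p|$ with high probability, so the frontier sweeps across the whole of $p$. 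As in the proof of Lemma~\ref{lem:BS}, each good clustering in fact informs the entire portion of $w$'s cluster lying within $\ell$ of the center, so iterating across the sequence guarantees that every node of $p$ knows a message at least $j+1$ by time $t'$.

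The hard part will be handling the fact that the clustering durations are themselves random (since $\beta_i$ is fresh per step), so the $X_i$ have different ranges and a plain Chernoff bound on $\{0,1\}$ indicators does not apply directly. I would resolve this either by bucketing the clusterings by $j$ (so that $\ell_i$ is constant within each bucket) and arguing per bucket, or uniformly by applying Hoeffding to $\sum X_i$ using $\max_i \ell_i \leq D^{-0.02} T$. A secondary technical point is that Theorem~\ref{thm:cprop} bounds only the \emph{expected} distance to the cluster center with constant probability over $j$, and the subsequent Markov step costs another constant factor, so the constants absorbed into $c$ when defining $\ell/(2c)$ must be chosen consistently with those needed for Corollary~3.7 to make the ``same fine cluster'' event a high-probability constant.
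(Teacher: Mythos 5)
Your overall structure is exactly the paper's: the same three-part definition of a ``good'' fine clustering (frontier close to its center via Theorem~\ref{thm:cprop} plus Markov, the target node $x$ in the same fine cluster via Corollary~3.7 of \cite{-HW16}, validity via Lemma~\ref{lem:ICP}), a union bound to get constant success probability per clustering, and a Hoeffding-type bound on the variable-range advancement variables --- the paper handles the variable ranges precisely as you anticipate, by applying Hoeffding to Bernoulli variables taking value $\beta_s^{-1}$ and bounding $\sum_s \beta_s^{-2}$ using $\max_s \beta_s^{-1} \le D^{0.1}$.

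There is, however, one step that fails as written: you advance the frontier by $\ell/(2c) = \Theta(\frac{\log n}{c\,\beta\log D})$ hops per good clustering, justified by the claim that $\beta\cdot\ell/(2c) = O(1/(c\log D))$ makes the ``same fine cluster'' event near-certain. That arithmetic is wrong: $\beta\cdot\frac{\log n}{2c\,\beta\log D} = \frac{\log n}{2c\log D}$, which is not a small constant whenever $\frac{\log n}{\log D}=\omega(1)$ (e.g.\ $D = \mathrm{polylog}(n)$, which the paper explicitly permits). Since each edge is cut by $\textsc{Partition}(\beta)$ with probability $\Theta(\beta)$, the probability that a node $d$ hops away lies in the same fine cluster is only a constant when $d = O(\beta^{-1})$; no choice of the constant $c$ rescues a distance that is a $\frac{\log n}{\log D}$ factor too large, so this is not the ``secondary technical point about constants'' you flag at the end. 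The correct advancement per good clustering is $\Theta(\beta^{-1})$ hops, which is what the paper's accounting $\sum_{s\text{ good}}\beta_s^{-1} = \Omega(D^{0.12})$ reflects. The final bound is unharmed: each clustering costs $\Theta(\frac{\log n}{\beta\log D})$ time and buys $\Theta(\beta^{-1})$ hops, i.e.\ a rate of $\frac{\log D}{\log n}$ hops per time step, so covering $|p| \le D^{0.12}$ hops takes $O(D^{0.12}\frac{\log n}{\log D})$ time as claimed --- but your proof needs the advancement distance corrected before the concentration step is applied.
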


\begin{proof}
We analyze only the main procedure, and consider separately each fine clustering used in the sequence between time-steps $t$ and $t'$. For any such clustering, let $w$ be the furthest node along $p$ which knows a message at least as high as $j+1$. We call the clustering \emph{good} if:
\begin{itemize}
\item $w$ is at distance at most $c_1\frac{\log n}{\beta \log D}$ from its cluster center;
\item the node $x$ which is $\frac{D^{0.1}}{c}$ nodes along $p$ from $w$ is in the same cluster as $w$;
\item $x$ and $w$ are valid (recall that this means they succeed in \textsc{Intra-Cluster Propagation}).
\end{itemize}

By Theorem \ref{thm:cprop}, and using Markov's inequality, we can choose $c_1$ such that the first event occurs with probability at least $0.54$, conditioned on all previous randomness. By Corollary 3.7 of \cite{-HW16}, we can choose $c_2$ so that the second event occurs with probability at least $0.99$, also conditioned on all previous randomness. By Lemma \ref{lem:ICP} the probability that $x$ and $w$ are valid, conditioned on the first event, is at least $0.98$. Therefore each fine clustering is good with probability at least $\frac 12$ (by the union bound).

Let $S$ be the set of all clusterings applied between time-steps $t$ and $t'$. We are interested in the quantity $\sum_{s \in S \text{ is good}}\beta_s^{-1}$. Note that this majorizes the quantity $\sum_{s \in S}x_s$, where the $x_s$ are independent Bernoulli variables which take value $\beta_s^{-1}$ with probability $\frac 12$ and $0$ otherwise. The expected value of this quantity is $\frac 12 \sum_{s \in S \text{ is good}}\beta_s^{-1} \ge \frac{c}{3} D^{0.12}$. By Hoeffding's inequality,
\begin{displaymath}
    \Prob{\sum_{s \in S} x_s
        \le
    \frac{c}{6} D^{0.12}}
        \le
    e^{-\frac{2|S|^2(\frac{c}{6} D^{0.12})^2}{\sum_{s \in S}\beta_s^{-2}}}
        \le
    e^{-\frac{2|S|(\frac{c}{6} D^{0.12})^2}{D^{0.1}}}
        \le
    e^{-\log^2 n}
        \enspace.
\end{displaymath}

By time $t'$, $w$ has advanced at least $\frac{\sum_{s \in S}}{c_2} \ge \frac{cD^{0.12}}{6c_2}$ steps along $p$, and so by choosing a sufficiently large constant in the big-Oh notation for $t'$, we can ensure that every node in $p$ knows a message at least as high as $j+1$.
\end{proof}

\noindent We combine these results to show how to propagate messages along any shortest path between two nodes.

\begin{lemma}[\textbf{\emph{All shortest paths}}]
\label{lem:asp}
Let $u$ and $v$ be any nodes in $\net$, $p$ be some shortest $(u,v)$-path, and let $b$ be the number of bad length-$D^{0.12}$ subpaths of $p$. If $u$ knows a message at least as high as $i$ at time-step~$t$, then after $t+2k(\frac{|p|\log n}{\log D} + (2 i+b) D^{0.125})$ steps, $v$ knows a message at least as high as $i$ with high probability.
\end{lemma}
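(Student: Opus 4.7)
The plan is to analyze how message $i$ propagates along $p$ by pipelining the local advances guaranteed by Lemmas~\ref{lem:GS} and~\ref{lem:BS}. Partition $p$ into $m = \lceil |p|/D^{0.12} \rceil$ consecutive subpaths $p_1,\ldots,p_m$, each of length at most $D^{0.12}$; by assumption exactly $b$ of them are bad. For each index $\ell$, set $T_\ell := O(D^{0.12}\log n/\log D)$ when $p_\ell$ is good and $T_\ell := O(D^{0.121})$ when $p_\ell$ is bad, so that $T_\ell = O(D^{0.125})$ in either case.

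For $\ell \in \{0,1,\ldots,m\}$ and $j \in \{0,1,\ldots,i\}$, let $\tau_{\ell,j}$ denote the earliest time by which every node of $p_\ell$ has learned a message of value at least $j$ (taking $p_0 := \{u\}$, so that $\tau_{0,j}=t$ trivially and $\tau_{\ell,0}=t$ for every $\ell$). I will establish inductively the recurrence
\[
   \tau_{\ell,j} \;\le\; \max(\tau_{\ell-1,j},\,\tau_{\ell,j-1}) + T_\ell .
\]
For the inductive step, by the time $\max(\tau_{\ell-1,j},\tau_{\ell,j-1})$ has elapsed the first node of $p_\ell$ (which coincides with the last node of $p_{\ell-1}$) already knows a message $\ge j$, and every node of $p_\ell$ has learned a message $\ge j-1$. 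Lemma~\ref{lem:GS} (if $p_\ell$ is good) or Lemma~\ref{lem:BS} (if $p_\ell$ is bad) then advances the whole of $p_\ell$ to level $j$ within $T_\ell$ further time-steps, yielding the recurrence.

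Solving this recurrence via the standard longest-monotone-lattice-path bound on $\{0,\ldots,m\}\times\{0,\ldots,i\}$ gives
\[
   \tau_{m,i} \;\le\; t + \sum_{\ell=1}^m T_\ell + i\cdot\max_\ell T_\ell
                \;\le\; t + O(|p|\log n/\log D) + O((i+b)\, D^{0.125}),
\]
where $\sum_\ell T_\ell$ decomposes into $O((m-b)\cdot D^{0.12}\log n/\log D) = O(|p|\log n/\log D)$ from good subpaths and $O(b\, D^{0.125})$ from bad ones. This matches the claimed bound up to the constant $2k$ (the constant absorbing all the big-$O$'s and the factor of $2$ coming from the interleaved main/background execution). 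Since $v\in p_m$, it knows a message $\ge i$ by $\tau_{m,i}$; the high-probability guarantee follows from a union bound over the $O(mi) = O(|p|\cdot|S|/D^{0.12})$ invocations of Lemmas~\ref{lem:GS} and~\ref{lem:BS}, each of which succeeds with high probability.

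The main technical hurdle is verifying the hypothesis of those lemmas at each inductive step: they stipulate that $j-1$ equal the minimum of the highest known messages over the full $D^{0.11}$-neighborhood of $p_\ell$, which may contain nodes off $p$. To ensure this, the inductive claim must be strengthened so that $\tau_{\ell,j}$ also bounds the time by which every node within distance $D^{0.11}$ of $p_\ell$ has learned a message $\ge j-1$. This extension is proved by running the same pipelined-wave argument along sibling shortest paths from $u$ reaching into that neighborhood; the extra cost is precisely what is absorbed into the factor $2i$ (rather than $i$) inside $(2i+b)D^{0.125}$ in the stated bound.
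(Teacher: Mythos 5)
Your proposal is correct and follows essentially the same route as the paper: your lattice recurrence $\tau_{\ell,j}\le\max(\tau_{\ell-1,j},\tau_{\ell,j-1})+T_\ell$ is a cleaner packaging of the paper's double induction (outer on the message index, inner on the path length), with Lemma~\ref{lem:GS} driving good subpaths and Lemma~\ref{lem:BS} bad ones, and your ``sibling shortest paths'' strengthening for the $D^{0.11}$-neighborhood is exactly how the paper discharges the hypothesis of those lemmas, with the same accounting of the extra cost into the $2i$ term.
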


\begin{proof}
We prove the lemma using double induction. Our `outer' induction shall be on the value $i$.

\noindent\textbf{Base case: $i = 1$.}\quad
By Lemma \ref{lem:badp}, $p$ contains at most $D^{0.63}$ bad subpaths. Applying Lemmas \ref{lem:BS} and \ref{lem:GS}, the time taken to inform $v$ of a message at least as high as $1$ is at most $D^{0.63} \cdot O(D^{0.121}) + D^{0.88}\cdot O(D^{0.12}\frac{\log n}{\log D})\le kD\frac{\log n}{\log D}$.

\noindent\textbf{Inductive step:}\quad
We can now assume the claim for $i=\ell-1$ (inductive assumption 1), and prove the inductive step $i=\ell$. We do this using a second induction, on $|p|$.

\textbf{Base case: $|p|\le D^{0.12}$.}\quad
$p$ is a single subpath. If $p$ is good, then by inductive assumption 1, all nodes within $D^{0.11}$ of $p$ know an ID at least as high as $\ell-1$ by time-step $t+k(\frac{(|p|+D^{0.11})\log n}{\log D} + (2 (\ell-1)+1) D^{0.125})$. Then, by Lemma \ref{lem:GS}, $v$ knows an ID at least as high as $\ell$ by time-step
\begin{displaymath}
    t +
    k \left(\frac{(|p|+D^{0.11})\log n}{\log D} + (2 \ell-1) D^{0.125}\right)
    +
    cD^{0.12}\frac{\log n}{\log D}
        \le
    t + k \left(\frac{|p|\log n}{\log D} + 2 \ell D^{0.125}\right)
        \enspace.
\end{displaymath}

If $p$ is bad then by inductive assumption (1), all nodes within $D^{0.11}$ of $p$ know an ID at least as high as $\ell-1$ by time-step $t+k(\frac{(|p|+D^{0.11})\log n}{\log D} + (2 (\ell-1)+2) D^{0.125})$. Then, by Lemma \ref{lem:BS}, $v$ knows an ID at least as high as $i$ by time-step
\begin{displaymath}
    t +
    k \left(\frac{(|p|+D^{0.11})\log n}{\log D} + 2 \ell D^{0.125}\right)
    + cD^{0.121}
        \le
    t + k \left(\frac{|p|\log n}{\log D} + (2 \ell+1) D^{0.125}\right)
        \enspace.
\end{displaymath}

\textbf{Inductive step:}\quad
Having proved the base case, we can now assume the claim for $i=\ell$ and $|p|< q$ (inductive assumption 2), and prove the inductive step $|p|=q$.

Let $u'$ be the start node of the last sub-path of $p$. If this subpath is good, then by inductive assumption 2, $u'$ knows an ID at least as high as $\ell$ by time-step $t+k(\frac{(|p|-D^{0.12})\log n}{\log D} + (2i+b)  D^{0.125})$. By inductive assumption (1), all nodes within $D^{0.11}$ of $p$ know a message at least as high as $\ell-1$ by time-step $t+k({(|p|+D^{0.11})\log n}{\log D} + (2 (\ell-1)+(b+1))) D^{0.125} \le t+k(\frac{(|p|-D^{0.12})\log n}{\log D} + (2 \ell+b)) D^{0.125}$ Therefore, by Lemma \ref{lem:GS}, $v$ knows a message at least as high as $\ell$ by time-step \begin{align*}
    t +
    k \left(\frac{(|p|-D^{0.12})\log n}{\log D} + (2 \ell+b) D^{0.125}\right)
    + cD^{0.12}\frac{\log n}{\log D}
        \le
    t + k \left(\frac{|p|\log n}{\log D} + (2 \ell+b) D^{0.125}\frac{\log n}{\log D}\right)
    \enspace.
\end{align*}

If the subpath is bad, then by inductive assumption (2), $u'$ knows an ID at least as high as $\ell$ by time-step $t+k(\frac{(|p|-D^{0.12})\log n}{\log D} + (2\ell+b-1)  D^{0.125})\le t+k(\frac{(|p|+D^{0.11})\log n}{\log D} + (2 (\ell-1)+b) D^{0.125})$. By inductive assumption (1), all nodes within $D^{0.11}$ of $p$ know a message at least as high as $\ell-1$ by time-step $t+k(\frac{(|p|+D^{0.11})\log n}{\log D} + (2 (\ell-1)+b) D^{0.125}) $ Therefore, by Lemma \ref{lem:BS}, $v$ knows a message at least as high as $\ell$ by time-step
\begin{align*}
    t +
    k \left(\frac{(|p|+D^{0.11})\log n}{\log D} + (2 (\ell-1)+b) D^{0.125}\right)
    + cD^{0.121}
        \\
        \le
    t + k \left(\frac{|p|\log n}{\log D} + (2 \ell+b) D^{0.125}\frac{\log n}{\log D}\right)
    \enspace.
\end{align*}

This completes the proof by induction.
\end{proof}

We are now ready to prove Theorem \ref{thm:compete}:

\begin{proof}
The precomputation phase takes at most $O(D+\log^{O(1)})$ time. Upon beginning the \textsc{Intra-Cluster Propagation} phase, one node $u$ knows the highest message. Therefore by Lemma \ref{lem:asp}, all nodes $v$ know this message within $2k(\frac{|dist(u,v)|\log n}{\log D} + (2 |S|+b) D^{0.125}) = O(\frac{D\log n}{\log D}+|S|D^{0.125} +\log^{O(1)}n)$ time-steps, with high probability.
\end{proof}

%---------------------------------------------------------------------------------------------------------------------------------------------------------

\section{Application of \textsc{Compete} to broadcasting and leader election}
\label{sec:broadcasting-and-leader-election}

It is not difficult to see that \textsc{Compete} can be used to perform both broadcasting and leader election.

\junk{
\begin{algorithm}[H]
\caption{Broadcasting}
\label{alg:BC}
\begin{algorithmic}
\State 1) Perform \textsc{Compete}$(\{s\})$.
\end{algorithmic}
\end{algorithm}	
}

\begin{theorem}
\label{thm:broadcasting}
%Algorithm \ref{alg:BC} completes broadcasting within $O(D\frac{\log n}{\log D}+\log^{O(1)} n)$ time with high probability.
\textsc{Compete}$(\{s\})$ completes broadcasting in $O(D\frac{\log n}{\log D}+\log^{O(1)} n)$ time with high probability.
\end{theorem}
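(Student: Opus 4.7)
The plan is to derive this theorem as an immediate corollary of Theorem \ref{thm:compete}. Broadcasting is the special case of \textsc{Compete} in which the source set $S$ consists of the single source node $\{s\}$ holding the broadcast message. Since there is only one message in play, the guarantee that every node learns the highest-valued message in $S$ is precisely the guarantee that every node learns the source message $s$ holds, which is exactly the broadcasting requirement.

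Next I would invoke the running time bound from Theorem \ref{thm:compete}, which gives $O(\tfrac{D\log n}{\log D} + |S| D^{0.125} + \log^{O(1)} n)$ with high probability. Substituting $|S| = 1$, the middle term becomes $D^{0.125}$, which is subsumed by the first term $\tfrac{D\log n}{\log D}$ (noting the standing assumption $D = \Omega(\log^c n)$ for a sufficiently large constant $c$, so $D \ge 1$ and the bound is meaningful). Thus the overall bound simplifies to $O(\tfrac{D\log n}{\log D} + \log^{O(1)} n)$, as required.

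There is essentially no obstacle here: the theorem is a direct instantiation of the main \textsc{Compete} result, since both the correctness guarantee (highest message known to all nodes) and the running time bound specialize cleanly to the singleton source case. The proof should be only a couple of sentences long, consisting of the observation that broadcasting $\equiv$ \textsc{Compete}$(\{s\})$ followed by the algebraic simplification of the running time with $|S|=1$.
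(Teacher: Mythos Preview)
Your proposal is correct and matches the paper's own proof essentially verbatim: the paper simply invokes Theorem~\ref{thm:compete} with $S=\{s\}$, notes that the single source message is the highest (and only) message so all nodes learn it, and observes that the running time becomes $O(D\tfrac{\log n}{\log D}+\log^{O(1)} n)$. Your additional remark that the $|S|D^{0.125}=D^{0.125}$ term is absorbed by the leading term is a fine clarification that the paper leaves implicit.
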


\begin{proof}
\textsc{Compete} informs all nodes of the highest message in the message set in time $O(D\frac{\log n}{\log D}+\log^{O(1)} n)$, with high probability. Since this set contains only the source message, broadcasting is completed.
\end{proof}

\begin{algorithm}[H]
\caption{{\sc Leader Election}}
\label{alg:LE}
\small
\begin{algorithmic}
\State 1) Nodes choose to become candidates $\in C$ with probability $\frac{100 \log n}{n}$.
\State 2) Candidates randomly generate $\Theta(\log n)$-bit IDs.
\State 3) Perform \textsc{Compete}$(C)$.
\end{algorithmic}
\end{algorithm}	

\begin{theorem}
\label{thm:leader-election}
Algorithm \ref{alg:LE} completes leader election within $O(D\frac{\log n}{\log D}+\log^{O(1)} n)$ time with high probability
\end{theorem}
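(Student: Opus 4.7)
The plan is to reduce leader election to \textsc{Compete} via three standard probabilistic facts about the candidate set $C$, and then invoke Theorem~\ref{thm:compete}. First I would establish, by a Chernoff bound on the independent indicator variables for each node joining $C$, that with high probability $|C| = \Theta(\log n)$; in particular $|C| \geq 1$, so the algorithm is well-defined, and $|C| \leq c_0 \log n$ for some constant $c_0$. Second, since each candidate draws an independent $\Theta(\log n)$-bit ID, the probability of a collision among $|C| = O(\log n)$ candidates is at most $\binom{|C|}{2}/2^{\Theta(\log n)} = n^{-\Omega(1)}$ by a union bound, so with high probability all candidate IDs are distinct and there is a unique maximum.

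Having conditioned on these two high-probability events, there is a unique highest ID in $C$, held by exactly one candidate. I would then invoke Theorem~\ref{thm:compete} on the set $S = C$: after running \textsc{Compete}$(C)$, every node in $\net$ knows this highest-valued ID with high probability, and only the candidate holding that ID recognizes it as its own. Hence every node outputs the same leader ID and precisely one node identifies it as its own, which is exactly the leader election specification.

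It remains to verify the running time bound. Theorem~\ref{thm:compete} gives a runtime of $O\!\left(\frac{D\log n}{\log D} + |C| D^{0.125} + \log^{O(1)} n\right)$ with high probability. Using $|C| = O(\log n)$, the middle term becomes $O(D^{0.125}\log n)$, which under the standing assumption $D = \Omega(\log^c n)$ for a sufficiently large constant $c$ is absorbed into $O(D\frac{\log n}{\log D} + \log^{O(1)} n)$ (for instance, $D^{0.125}\log n \le D^{0.25} = o(D/\log D)$ whenever $D$ is a sufficiently large polylog). A final union bound over the $O(1)$ high-probability events (size of $C$, distinctness of IDs, success of \textsc{Compete}) preserves the high-probability guarantee.

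The only place requiring care is the second step: one must ensure the ID length is a large enough multiple of $\log n$ so that, after the union bound both over pairs of candidates and over the $n^{-c}$ error in $|C|$'s bound, the failure probability remains $n^{-\Omega(1)}$ for an arbitrary constant exponent. This is routine and just dictates the hidden constant in "$\Theta(\log n)$-bit IDs." No new analytic obstacle arises beyond what is already handled by Theorem~\ref{thm:compete}.
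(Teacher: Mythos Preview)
Your proposal is correct and follows essentially the same approach as the paper's own proof, which simply asserts that with high probability $|C|=\Theta(\log n)$ and all candidate IDs are unique, then invokes \textsc{Compete} to conclude. You have merely fleshed out the standard probabilistic details (Chernoff for $|C|$, union bound for ID distinctness, explicit absorption of the $|C|D^{0.125}$ term using the standing assumption $D=\Omega(\log^c n)$) that the paper leaves implicit.
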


\begin{proof}
With high probability $|C| = \Theta(\log n)$ and all candidate IDs are unique. Conditioning on this, \textsc{Compete} informs all nodes of the highest candidate ID within time $O(D\frac{\log n}{\log D}+\log^{O(1)} n)$, with high probability. Therefore leader election is completed.
\end{proof}

%---------------------------------------------------------------------------------------------------------------------------------------------------------

\section{Conclusions}

The tasks of broadcasting and leader election in radio networks are longstanding, fundamental problems in distributed computing. Our main contribution are new algorithms for these problems that improve running times for both to $O(D\frac{\log n}{\log D}+\log^{O(1)}n)$, which is optimal for a wide range of $D$.

There is no better lower bound than $\Omega(D+\log^2 n)$ for broadcasting or leader election when spontaneous transmissions are allowed, so the most immediate open question is to close that gap. While a tighter analysis of our method might trim the additive polylog$(n)$ term significantly, it is difficult to see how $\log^2 n$ could be reached without a radically different approach. Similarly, the $D\frac{\log n}{\log D}$ term seems to be a limit of the clustering approach, and reducing it to $D$ would likely require significant changes. In fact, we would not be surprised if our upper bounds $O(D \frac{\log n}{\log D})$ were tight for $D = \Omega(\log^c n)$ for a sufficiently large constant $c$.

The main focus of this paper has been to study the impact of spontaneous transmissions for basic communication primitives in randomized algorithms undirected networks. An interesting question is whether spontaneous transmissions can help in \emph{directed} networks, which would be very surprising, or for \emph{deterministic} protocols.

%---------------------------------------------------------------------------------------------------------------------------------------------------------

\newcommand{\Proc}{Proceedings of the\xspace}

\newcommand{\STOC}{Annual ACM Symposium on Theory of Computing (STOC)}
\newcommand{\FOCS}{IEEE Symposium on Foundations of Computer Science (FOCS)}
\newcommand{\SODA}{Annual ACM-SIAM Symposium on Discrete Algorithms (SODA)}

\newcommand{\COCOON}{Annual International Computing Combinatorics Conference (COCOON)}
\newcommand{\DISC}{International Symposium on Distributed Computing (DISC)}
\newcommand{\ESA}{Annual European Symposium on Algorithms (ESA)}
\newcommand{\ICALP}{Annual International Colloquium on Automata, Languages and Programming (ICALP)}
\newcommand{\IPL}{Information Processing Letters}
\newcommand{\JACM}{Journal of the ACM}
\newcommand{\JALGORITHMS}{Journal of Algorithms}
\newcommand{\JCSS}{Journal of Computer and System Sciences}
\newcommand{\PODC}{Annual ACM Symposium on Principles of Distributed Computing (PODC)}
\newcommand{\SICOMP}{SIAM Journal on Computing}
\newcommand{\SPAA}{Annual ACM Symposium on Parallelism in Algorithms and Architectures}
\newcommand{\STACS}{Annual Symposium on Theoretical Aspects of Computer Science (STACS)}
\newcommand{\TALG}{ACM Transactions on Algorithms}
\newcommand{\TCS}{Theoretical Computer Science}
\bibliography{spont}

%---------------------------------------------------------------------------------------------------------------------------------------------------------

\clearpage
\appendix
\begin{center}
\bf\huge Appendix
\end{center}

%---------------------------------------------------------------------------------------------------------------------------------------------------------

\section{Clustering property (proof of Theorem \ref{thm:cprop})}
\label{sec:clustering-property}

In this section we prove a key property of the clustering method in our algorithm \textsc{Partition$(\beta)$}. Our analysis is based on a method first introduced in \cite{-MPX13}.
The concept is as follows: each node $v$ independently generates an exponentially distributed random variable $\delta_v$, that is, a variable taking values in $\mathbb{R}_{\ge 0}$  with $\prob{\delta_v \le y} = 1-e^{-\beta y}$. Then, each node chooses its cluster center $u$ to be the node maximizing $\delta_u - dist(u,v)$. It can be seen by the triangle inequality that a node which is cluster center to any node is also cluster center to itself.
For details of how to implement this in the radio network setting, see \cite{-HW16}.

The rest of this section is now devoted to the proof of Theorem \ref{thm:cprop}.

\cprop*

Our first step in proving Theorem \ref{thm:cprop} is to obtain a bound for distance to cluster center which is based upon the number of nodes at each distance layer from $v$. To this purpose, let $A_i(v)$ be the set of nodes at distance $i$ from $v$ and denote $x_i = |A_i(v)|$. Denote $\xx \in \mathbb{N}_0^D$ to be the vector with these $x_i$ as coefficients.

Denote $T_{\xx,\beta} = \sum_{i=0}^{D} i x_i e^{-i\beta}$ and $B_{\xx,\beta} = \sum_{i=0}^{D} x_i e^{-i\beta}$. Denote $S_{\xx,\beta} = \frac{T_{\xx,\beta}}{B_{\xx,\beta}} = \frac{\sum_{i=0}^{D} i x_i e^{-i\beta}}{\sum_{i=0}^{D} x_i e^{-i\beta}}$. These quantities will be used in the following auxiliary lemma describing the expected distance from any fixed $v$ to its cluster center after applying $\textsc{Partition}(\beta)$.

\begin{lemma}
\label{lem:edist}
For any fixed node $v$ and value $\beta$ with $D^{-0.01}\le\beta\le D^{-0.1}$, the expected distance from $v$ to its cluster center upon applying $\textsc{Partition}(\beta)$ is at most $\frac{5 \sum_{i=0}^{D} i x_i e^{-i\beta}}{\sum_{i=0}^{D} x_i e^{-i\beta}} = 5S_{\xx,\beta}$.
\end{lemma}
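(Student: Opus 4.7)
My plan is to reformulate the clustering rule via the substitution $U_u := e^{-\beta \delta_u}$, which is uniform on $[0,1]$ when $\delta_u \sim \text{Exp}(\beta)$. Under this substitution, $\arg\max_u(\delta_u - d_u) = \arg\min_u(e^{\beta d_u} U_u)$, and the key structural fact is that $v$ itself contributes the unscaled variable $U_v$ (since $d_v = 0$ gives scale factor $1$), so the winning value $\min_u(e^{\beta d_u} U_u)$ is always at most $1$. This forces the effective integration range in every winning probability to collapse to $[0,1]$.

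From this I would derive closed-form integral expressions for the probability that each node is selected as $v$'s cluster center:
\[
\Pr[v = C(v)] = \int_0^1 \prod_{w \neq v} \bigl(1 - t\, e^{-\beta d_w}\bigr)\, dt,
\qquad
\Pr[u = C(v)] = e^{-\beta d_u} \int_0^1 (1-t) \!\! \prod_{w \neq u, v} \!\! \bigl(1 - t\, e^{-\beta d_w}\bigr)\, dt,
\]
the second expression holding for $u \neq v$. These come from conditioning on the winning value $t = e^{\beta d_u} U_u$, using independence of the $U_w$'s, and noting that the factor corresponding to $w = v$ is $(1-t)_+$, which truncates the integration at $1$.

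The heart of the argument is then the pointwise inequality $1-t \le 1 - t\, e^{-\beta d_u}$ on $[0,1]$ (valid since $e^{-\beta d_u} \le 1$). Multiplying by the common nonnegative weight $\prod_{w \neq u, v}(1 - t e^{-\beta d_w})$ and integrating gives $\Pr[u = C(v)] \le e^{-\beta d_u}\, \Pr[v = C(v)]$ for every $u \neq v$. Summing against $d_u$ (and using $d_v = 0$) yields $\mathbb{E}[L] \le \Pr[v = C(v)] \cdot T_{\xx,\beta}$.

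To finish, I would bound $\Pr[v = C(v)]$ from above by applying $1 - x \le e^{-x}$ inside the integrand, obtaining $\Pr[v = C(v)] \le \int_0^1 e^{-t(B_{\xx,\beta}-1)}\, dt \le 1/(B_{\xx,\beta}-1)$. A short case split concludes: when $B_{\xx,\beta} \ge 2$ this gives $\mathbb{E}[L] \le 2T_{\xx,\beta}/B_{\xx,\beta} \le 5T_{\xx,\beta}/B_{\xx,\beta}$, while for $B_{\xx,\beta} \in [1,2)$ the trivial bound $\Pr[v = C(v)] \le 1$ is already enough since $5T_{\xx,\beta}/B_{\xx,\beta} > T_{\xx,\beta} \ge \mathbb{E}[L]$. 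The main obstacle I anticipate is setting up the integral formulas correctly, especially correctly accounting for the truncation at $t=1$ and for which node is being excluded from the product; after that the pointwise inequality, summation, and final bound are quick calculus.
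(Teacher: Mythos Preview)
Your proof is correct and takes a genuinely different route from the paper. The paper conditions directly on $\delta_u = p$ and must handle an integral over $[i,\infty)$; it splits this into a part $J$ over $[i,1/\beta]$ and a part $K$ over $[1/\beta,\infty)$, bounds $J \le n^{-2}$ by invoking a structural lower bound $B_{\xx,\beta} \ge \frac{1}{2\beta}$ (which uses that the graph is connected so each distance layer up to $D/2$ is nonempty), and bounds $K \le 4/(e^{\beta i} B_{\xx,\beta})$ by direct calculus, finally summing to $\mathbb{E}[L] \le D/n + 4 S_{\xx,\beta} \le 5 S_{\xx,\beta}$. Your uniform-variable substitution $U_u = e^{-\beta\delta_u}$ collapses every integral to $[0,1]$ and produces the clean dominance $\Pr[u = C(v)] \le e^{-\beta d_u}\Pr[v = C(v)]$, after which only the soft estimate $\Pr[v = C(v)] \le 1/(B_{\xx,\beta}-1)$ is needed. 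Your argument is more elementary, avoids both the integral split and the network-dependent lower bound on $B_{\xx,\beta}$, and in fact establishes the lemma for every $\beta > 0$ (not just the stated range) with the sharper constant $2$ whenever $B_{\xx,\beta}\ge 2$; the paper's approach is more hands-on but ties the constant $5$ to the assumption $D^{-0.1}\le\beta\le D^{-0.01}$.
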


\begin{proof}
We bound expected distance to cluster center:
\begin{align*}
    \expect{\text{distance from $v$ to its cluster center}}
        &=
    \sum_{i=0}^{D}i\prob{\text{$v$'s cluster center is distance $i$ away}}
        \\
        &=
    \sum_{i=0}^{D} i \cdot \left(
        \sum_{u \in A_i(v)}\prob{\text{$u$ is $v$'s cluster center}}
        \right)
        \enspace.
\end{align*}
\junk{
\begin{align*}
    \expect{\text{distance }&\text{from $v$ to its cluster center}}
        =
    \sum_{i=0}^{D}i\prob{\text{$v$'s cluster center is distance $i$ away}}
        \\
        &=
    \sum_{i=0}^{D} i x_i\prob{\text{some fixed node $u$ at distance $i$ is $v$'s cluster center}}
        \enspace.
\end{align*}
}

We concentrate on this latter probability and henceforth fix $u \in A_i(v)$ to be some node at distance $i$ from $v$. We note that
\begin{displaymath}
    \prob{\text{$u$ is $v$'s cluster center}} =
    \int_i^{\infty} \beta e^{-\beta p}
        \prob{\text{$u$ is $v$'s cluster center}|\delta_u = p} dp
\end{displaymath}
by conditioning on the value of $\delta_u$ over its whole range and multiplying by the corresponding probability density function (we can start the integral at $i$ since if $\delta_u<i$ the probability of $u$ being $v$'s cluster center is~$0$).

Having conditioned on $\delta_u$, we can evaluate the probability that $u$ is $v$'s cluster center based on the random variables generated by other nodes, since the probabilities that each other node `beats' $u$ are now independent:
\begin{displaymath}
    \prob{\text{$u$ is $v$'s cluster center}} =
            \int_i^{\infty} \beta e^{-\beta p} \prod_{w\neq u}
            \prob{\delta_w - dist(v,w)< \delta_u - dist(v,u)|\delta_u = p} dp
        \enspace.
\end{displaymath}

We can simplify by grouping the nodes $w$ based on distance from $v$, though we must be careful to include a $\frac{1}{\prob{\delta_u < p }}$ term to cancel out $u$'s contribution to the resulting product:
\begin{displaymath}
    \prob{\text{$u$ is $v$'s cluster center}} =
        \int_i^{\infty} \frac{\beta e^{-\beta p}}{\prob{\delta_u < p }} \prod_{k = 0}^{D} \prod_{w\in A_k(v)} \prob{\delta_w - k< p - i} dp
        \enspace.
\end{displaymath}

Plugging in the cumulative distribution function gives the following:
\begin{displaymath}
    \prob{\text{$u$ is $v$'s cluster center}} =
        \int_i^{\infty} \frac{\beta e^{-\beta p}}{1-e^{-\beta p}}
            \prod_{k = 0}^{D}\prod_{w\in A_k(v)}1-e^{-\beta ( p - i + k)} dp
        \enspace.
\end{displaymath}

We use the standard inequality $1-y\le e^{-y}$ for $y \in [0,1]$, here setting $y = e^{-\beta ( p - i + k)}$ (note that since $i \le p$, we have $y \in [0,1]$), and account for the second product by taking the contents to the power of $x_k$:
\begin{displaymath}
    \prob{\text{$u$ is $v$'s cluster center}}
        \le
    \int_i^{\infty}\!\!\frac{\beta e^{-\beta p}}{1-e^{-\beta p}}
        \prod_{k = 0}^D \prod_{w\in A_k(v)}e^{-e^{-\beta  (p - i + k)}} dp
        =
    \int_i^{\infty}\!\!\frac{\beta e^{-\beta p}}{1-e^{-\beta p}}
        \prod_{k = 0}^D e^{-e^{-\beta  (p - i + k)}x_k} dp
        \enspace.
\end{displaymath}

We can also remove the remaining product by taking it as a sum into the exponent, and re-arranging some terms yields:
\begin{displaymath}
    \prob{\text{$u$ is $v$'s cluster center}}
        \le
    \int_i^{\infty} \frac{\beta e^{-\beta p}}{1-e^{-\beta p}}
        e^{-e^{\beta(i- p)}\sum_{k = 0}^{D}x_k e^{-\beta k}} dp
        =
    \int_i^{\infty} \frac{\beta e^{-\beta p}}{1-e^{-\beta p}}
        e^{-e^{\beta(i- p)}B_{\xx,\beta}} dp
        \enspace,
\end{displaymath}
where for succinctness we use our definition $B_{\xx,\beta} = \sum_{i=0}^{D} x_i e^{-i\beta}$.

At this point we split the integral and bound the parts separately, since they exhibit different behavior:
\begin{displaymath}
    \prob{\text{$u$ is $v$'s cluster center}} \le J+K
        \enspace,
\end{displaymath}
where $J = \int_i^{\frac 1\beta} \frac{\beta e^{-\beta p}}{1-e^{-\beta p}} e^{-e^{\beta(i- p)}B_{\xx,\beta}} dp$ and $K = \int_{\frac 1\beta}^{\infty} \frac{\beta e^{-\beta p}}{1-e^{-\beta p}} e^{-e^{\beta(i- p)}B_{\xx,\beta}} dp$.

To bound $J$, we make use of the following bound on $B_{\xx,\beta}$:
\begin{displaymath}
    B_{\xx,\beta}
        =
    \sum_{k=0}^D x_k e^{-k\beta}
        \ge
    \sum_{k=0}^{\lceil\frac D2\rceil} e^{-k\beta}
        \ge
    \int_{-1}^{\frac D2} e^{-z\beta} dz
        =
    \frac{-1}{\beta}(e^{-\frac {\beta D}{2}}-e^{-\beta})
        \ge
    \frac {1}{2\beta}
        \enspace.
\end{displaymath}

So,
\begin{displaymath}
    J
        =
    \int_i^{\frac 1\beta} \frac{\beta e^{-\beta p}}{1-e^{-\beta p}}
            e^{-e^{\beta(i- p)}B_{\xx,\beta}} dp
        \le
    \int_i^{\frac 1\beta} \frac{\beta e^{-\beta p}}{1-e^{-\beta p}}
            e^{-e^{\beta(i- p)}\frac{1}{2\beta}} dp
        \enspace.
\end{displaymath}

Since $e^{\beta(i- p)}\ge e^{-1}$, we obtain,
\begin{displaymath}
    J
        \le
    \int_1^{\frac 1\beta} \frac{\beta e^{-\beta p}}{1-e^{-\beta p}}
            e^{-\frac{1}{2e\beta}} dp
        =
    \beta e^{-\frac{1}{2e\beta}}
            \int_1^{\frac 1\beta} \frac{e^{-\beta p}}{1-e^{-\beta p}} dp
        \enspace.
\end{displaymath}

We can then use that $\int_{a}^{b} \frac{e^{-\beta p}}{1-e^{-\beta p}} = \frac{1}{\beta}\log \frac{(1-e^{\beta b})}{(1-e^{\beta a})}+a-b$ to evaluate:
\begin{displaymath}
    J
        \le
    e^{-\frac{1}{2e\beta}} \log \frac{(1-e)}{(1-e^{\beta})}
        \enspace.
\end{displaymath}

Since $e^\beta > 1+\beta$, re-arranging yields,
\begin{displaymath}
    J
        \le
    e^{-\frac{1}{2e\beta}} \log \frac{e-1}{\beta}
        \enspace.
\end{displaymath}

Finally, since we can assume that $\frac 1 \beta \ge \log^c n$ for some sufficiently large $c$, we obtain,
\begin{displaymath}
    J
        \le
    e^{-\frac{\log^2 n}{2e}} \log \frac{e-1}{\beta}
        \le
    n^{-2}
        \enspace.
\end{displaymath}

We now turn our attention to $K = \int_{\frac 1\beta}^{\infty} \frac{\beta e^{-\beta p}}{1-e^{-\beta p}} e^{-e^{\beta(i- p)}B_{\xx,\beta}} dp$. Since $1-e^{-\beta p}\ge 1-e^{-1} > \frac12$, we get
\begin{displaymath}
    K
        <
    \int_{\frac{1}{\beta}}^{\infty} 2\beta e^{-\beta p}
        e^{-e^{ -\beta p}e^{\beta i}B_{\xx,\beta}} dp
        \enspace.
\end{displaymath}

Using that $e^{-e^{-\beta p}} \le 1-\frac12 e^{-\beta p}$ (since $0 \le e^{-\beta p} \le 1$), we obtain,
\begin{displaymath}
    K
        <
    \int_{\frac 1\beta}^{\infty} 2\beta e^{-\beta p}
        (1-\frac 12 e^{ -\beta p})^{e^{\beta i}B_{\xx,\beta}} dp
        \enspace.
\end{displaymath}

Evaluating the integral, using $\int_a^\infty e^{-\beta p}(1-\frac 12 e^{-\beta p})^c = \frac{(e^{-a\beta}-2)(1 - \frac12 e^{-a \beta})^c +2}{\beta (1+c)}$, we obtain,
\begin{displaymath}
    K
        <
    2 \frac{(e^{-1}-2)(1 - \frac 12 e^{-1})^{e^{\beta i}B_{\xx,\beta}}+2}{ 1+e^{\beta i}B_{\xx,\beta}}
        \le
    \frac{4}{e^{\beta i}B_{\xx,\beta}}
        \enspace.
\end{displaymath}

We can now combine our calculations to prove the lemma. Since $x_i = |A_i(v)|$, we have,
\begin{align*}
    \expect{\text{distance }&\text{from $v$ to its cluster center}}
        =
    \sum_{i=0}^D i \sum_{u\in A_i(v)}\prob{\text{$u$ is $v$'s cluster center}}
        \\
        &\le
    \sum_{i=0}^D i x_i (J+K)
        <
    \sum_{i=0}^D i x_i \left(n^{-2}+\frac{4}{e^{\beta i}B_{\xx,\beta}}\right)
        \\
        &\le
        n^{-2}\sum_{i=0}^D D x_i  +
            \frac{4\sum_{i=0}^D i x_ie^{-\beta i}}{B_{\xx,\beta}}
        \le
    \frac{D}{n} + 4S_{\xx,\beta}
        \le
    5S_{\xx,\beta}
        \enspace.
    \qedhere
\end{align*}
\end{proof}

%---------------------------------------------------------------------------------------------------------------------------------------------------------

\subsection{Simplifying the form of $\xx$ to bound $S_{\xx,\beta}$}
\label{subsec:clustering-property-simplification-of-x}

To continue with the proof of Theorem \ref{thm:cprop}, by Lemma \ref{lem:edist}, our main goal is to upper bound the value of $S_{\xx,\beta} = \frac{\sum_{i=0}^{D} i x_i e^{-i\beta}} {\sum_{i=0}^{D} x_i e^{-i\beta}}$. To simplify our analysis, we will apply two transformations to $\xx$ which will provide us with useful properties for bounding, while not increasing any $S_{\xx,\beta}$ by more than a constant factor.

The first transformation we apply will be to collate coefficients of $\xx$ into indices which are just the powers of $2$. That is, we sum the coefficients of $\xx$ over regions of doubling size.

Let $f: \mathbb{R}^{D+1} \rightarrow \mathbb{R}^{D+1}$ be given by
$f(\xx)_i = \begin{cases}
    \sum_{\ell = 2i}^{4i-1} x_\ell &
        \text{if $i=2^k$ for some $k \in \mathbb{N}_0$,}\\
    0 &\text{otherwise.}
\end{cases}$

The second transformation is to ensure that the coefficients of $\xx$, which we now require to be $0$ at all indices which are not powers of $2$, are not ``too decreasing;'' in particular, we guarantee that each non-zero coefficient is at least half the previous one.

Let $g: \mathbb{R}^{D+1} \rightarrow \mathbb{R}^{D+1}$ be given by
$g(\xx)_i = \begin{cases}
    \sum_{\ell \le i} \frac{\ell x_\ell}{i} &
        \text{if $i = 2^k$ for some $k\in \mathbb{N}_0$,}\\
    0  &\mbox{otherwise.}
\end{cases}$

Having performed these transformations, we can make an argument about the ratios of consecutive non-zero coefficients to bound $S_{x,\beta}$.

%---------------------------------------------------------------------------------------------------------------------------------------------------------

\subsubsection{Bounding $S_{\xx,\beta}$ in terms of $S_{f(\xx),\beta}$}
\label{subsubsec:clustering-property-1st-simplification-of-x}

We begin with bounding $S_{\xx,\beta}$ in terms of $S_{f(\xx),\beta}$.

\begin{claim}
\label{cl:trans1}
For all $\xx \in \mathbb{N}_0^D$, $S_{\xx,\beta} \le 11 S_{f(\xx),\beta}$.
\end{claim}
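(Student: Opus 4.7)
The plan is to prove the equivalent cross-multiplied inequality
\begin{equation*}
T_{\xx,\beta}\cdot B_{f(\xx),\beta} \;\le\; 11\cdot B_{\xx,\beta}\cdot T_{f(\xx),\beta}
\end{equation*}
via a symmetrization argument on the expanded double sums. Writing $p(\ell):=2^k$ for $\ell\in[2^{k+1},2^{k+2})$, so that $\ell\in[2p(\ell),4p(\ell))$, and using the reformulations $B_{f(\xx),\beta}=\sum_{m\ge 2}x_m e^{-p(m)\beta}$ and $T_{f(\xx),\beta}=\sum_{m\ge 2}p(m)\,x_m e^{-p(m)\beta}$, both sides become double sums over pairs $(\ell,m)$ with $\ell\ge 0$, $m\ge 2$. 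Note that $p$ is non-decreasing on $\{\ell\ge 2\}$.

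The low-$\ell$ contributions ($\ell\in\{0,1\}$) are immediate term-by-term, since $\ell\le 1\le 11\,p(m)$ for every $m\ge 2$. For $\ell,m\ge 2$ I pair the contribution of $(\ell,m)$ with that of $(m,\ell)$; after factoring out $x_\ell x_m\ge 0$ and assuming WLOG $\ell\le m$ (so $p(\ell)\le p(m)$), the combined pair inequality reduces to
\begin{equation*}
(11\,p(m)-\ell)\,e^{-\ell\beta-p(m)\beta}\;\ge\;(m-11\,p(\ell))\,e^{-m\beta-p(\ell)\beta},
\end{equation*}
whose left-hand side is positive because $\ell<4p(\ell)\le 4p(m)$. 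The diagonal $\ell=m$ is trivial from $\ell<4p(\ell)<11p(\ell)$.

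The case analysis turns on the dyadic ratio $r:=p(m)/p(\ell)\in\{1,2,4,8,\dots\}$. If $r\le 2$, then $m<4p(m)\le 8p(\ell)<11p(\ell)$, so the right-hand side is non-positive and the inequality is trivial. If $r\ge 4$, two elementary geometric bounds finish the job: (i)~$m-\ell\ge 2p(m)-\bigl(4p(\ell)-1\bigr)\ge p(m)-p(\ell)$ (using $p(m)\ge 4p(\ell)$), which gives the exponential ratio $e^{-m\beta-p(\ell)\beta}/e^{-\ell\beta-p(m)\beta}\le 1$; and (ii)~$\ell+m<4\bigl(p(\ell)+p(m)\bigr)<11\bigl(p(\ell)+p(m)\bigr)$, which rearranges to $11p(m)-\ell\ge m-11p(\ell)>0$. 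Since $m-11p(\ell)>0$ in this subcase, combining (i) and (ii) yields the pair inequality.

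The main delicacy I anticipate is the boundary case $r=2$: there $m$ can reach $8p(\ell)-1$, only just below the threshold $11p(\ell)$, and inequality (i) can fail (the exponential ratio can exceed $1$ because $m-\ell$ may be as small as $1$ while $p(m)-p(\ell)=p(\ell)$ may be large). The constant $11$ is precisely tuned so that $r=2$ drops into the trivial regime and no exponential estimate is required there. Summing the pair inequalities over all unordered pairs, together with the low-$\ell$ and diagonal handling, then yields the claim.
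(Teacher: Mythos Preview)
Your argument is correct and complete, modulo one minor overstatement: in the subcase $r\ge 4$ you assert ``$m-11p(\ell)>0$ in this subcase,'' but this is not always true (e.g.\ $r=4$ allows $m=8p(\ell)$, giving $m-11p(\ell)<0$). This does not damage the proof, since whenever $m-11p(\ell)\le 0$ the pair inequality is again trivial; you should simply split the $r\ge 4$ case once more rather than claim positivity outright. With that fix the symmetrization goes through cleanly.

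Your route is genuinely different from the paper's. The paper does \emph{not} cross-multiply or pair terms. Instead it proves an auxiliary ``weighted-average monotonicity'' lemma: if $p$ lies below the weighted average $\sum iw_i/\sum w_i$, then decreasing weights below $p$ and increasing weights above $p$ cannot push the average below $p$. It then writes both $S_{\xx,\beta}$ and $S_{f(\xx),\beta}$ as weighted averages with weights $w_i=x_ie^{-i\beta}$ and $w'_i=x_ie^{(-2^{q-1}-2^{\lfloor\log i-1\rfloor})\beta}$ respectively (after a scaling by $e^{-2^{q-1}\beta}$, where $q=\lfloor\log S_{\xx,\beta}\rfloor$), checks that the weights shift in the right direction around the pivot $p=3\cdot 2^{q-2}$, and concludes $S_{f(\xx),\beta}\ge p/4>S_{\xx,\beta}/11$. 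Your pairing argument is more elementary and self-contained --- no auxiliary lemma, and the case analysis makes transparent why any constant at least $8$ would work (the $r=2$ case forces the threshold). The paper's approach is perhaps more conceptual, isolating a reusable monotonicity principle, but arrives at the slightly looser constant $32/3$ rounded up to $11$.
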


\begin{proof}
We start with the following auxiliary claim.

\begin{claim}
\label{cl:avg}
Consider an expression of the form $\frac{\sum_{i=0}^D i w_i}{\sum_{i=0}^D w_i}$, where all $w_i$ are non-negative. Let $p$ be an integer with $p < \frac{\sum_{i=0}^D i w_i}{\sum_{i=0}^D w_i}$. For all $i < p$ let $0 \le w'_i \le w_i$, and for all $i \ge p$ let $w'_i \ge w_i$. Then $\frac{\sum_{i=0}^D i w'_i}{\sum_{i=0}^D w'_i} > p$.
\end{claim}

Intuitively, consider $\frac{\sum_{i=0}^D i w_i}{\sum_{i=0}^D w_i}$ as a weighted average of the $i$ (with weights $w_i$). The claim then says that for any $p$ which is less than the value of the average, increasing the weights for indices higher than $p$ and reducing them for indices lower than $p$ cannot reduce the weighted average below $p$.

\begin{proof}[Proof of Claim \ref{cl:avg}]
\begin{align*}
\frac{\sum_{i=0}^{D}iw'_i}{\sum_{i=0}^{D}w'_i}
    &=
    \frac{\sum_{i=0}^{D}iw_i + \sum_{i=0}^{D}i(w'_i-w_i)}{\sum_{i=0}^{D}w'_i}\\
%    \\
%   &
    &= \frac{\frac{\sum_{i=0}^{D}iw_i}{\sum_{i=0}^{D}w_i} \cdot \sum_{i=0}^{D}w_i + \sum_{i=0}^{p-1}i(w'_i-w_i)+ \sum_{i=p}^{D}i(w'_i-w_i)}{\sum_{i=0}^{D}w_i+\sum_{i=0}^{D}(w'_i-w_i)}
    \\
    &
    > \frac{p\cdot \sum_{i=0}^{D}w_i + \sum_{i=0}^{p-1}p(w'_i-w_i)+ \sum_{i=p}^{D}p(w'_i-w_i)}{\sum_{i=0}^{D}w_i+\sum_{i=0}^{D}(w'_i-w_i)}\\
%   &
    &= \frac{p\cdot \left(\sum_{i=0}^{D}w_i + \sum_{i=0}^{D}(w'_i-w_i)\right)}{\sum_{i=0}^{D}w_i+\sum_{i=0}^{D}(w'_i-w_i)}
    =p
\enspace.
\qedhere
\end{align*}
\end{proof}

We apply Claim \ref{cl:avg} to analyze the effect of the transformation $f$, in particular to compare $S_{f(\xx),\beta}$ with $S_{\xx,\beta}$. First we find an expression for $S_{\xx,\beta}$ in a form for which we can use the claim:
\begin{align*}
    S_{\xx,\beta}
    &
%    &
    = \frac{\sum_{i=0}^{D} i x_i e^{-i\beta}}{\sum_{i=0}^{D} x_i e^{-i\beta}}
    = \frac{\sum_{i=0}^{D} i w_i}{\sum_{i=0}^{D} i w_i }
    \enspace,
\end{align*}
where $w_i = x_i e^{-i\beta}$.

Next we do the same for $S_{f(\xx),\beta}$:

\begin{align*}
    S_{f(\xx),\beta}
    &
    = \frac{\sum_{k=0}^{\log D} 2^k \sum_{\ell=2^{k+1}}^{2^{k+2}-1} x_\ell e^{-2^k \beta}}{\sum_{k=0}^{\log D} \sum_{\ell=2^{k+1}}^{2^{k+2}-1}x_\ell e^{-2^k \beta}}
%    \\
%    &
    = \frac{\sum_{\ell=2}^{D} 2^{\lfloor\log \ell - 1\rfloor} x_\ell e^{-2^{\lfloor\log \ell - 1\rfloor}\beta}}{\sum_{\ell=2}^{D} x_\ell e^{-2^{\lfloor\log \ell - 1\rfloor}\beta}}
        \enspace.
\end{align*}

We multiply both the numerator and denominator by a scaling factor to make the expression more comparable to $S_{\xx,\beta}$. Let $q := \lfloor \log S_{\xx,\beta} \rfloor$. Our scaling factor will be $e^{-2^{q-1}}$.
\begin{align*}
    S_{f(\xx),\beta}
    &
	= \frac{\sum_{\ell=2}^{D} 2^{\lfloor\log \ell - 1\rfloor} x_\ell e^{-2^{\lfloor\log \ell - 1\rfloor}\beta}}{\sum_{\ell=2}^{D} x_\ell e^{-2^{\lfloor\log \ell - 1\rfloor}\beta}}
    \ge \frac{\sum_{\ell=2}^{D} \frac l4 x_\ell e^{(-2^{q-1}-2^{\lfloor\log \ell - 1\rfloor})\beta}}{\sum_{\ell=2}^{D} x_\ell e^{(-2^{q-1}-2^{\lfloor\log \ell - 1\rfloor})\beta}}
    = \frac{\sum_{i=0}^{D}iw'_i}{4\sum_{i=0}^{D}w'_i}
    \enspace,
\end{align*}
where $w'_i = \begin{cases} x_i e^{(-2^{q-1}-2^{\lfloor \log i - 1\rfloor})\beta}&\text{if } i \ge 2,\\0  &\text{otherwise.}\end{cases}$

We set $p = 3 \cdot 2^{q-2}$, and verify that we meet all of the conditions of the Claim \ref{cl:avg}:

Firstly we need that all $w_i$ and $w'_i$ are non-negative, which is obviously the case.

Secondly we need that $p < \frac{\sum_{i=0}^D i w_i}{\sum_{i=0}^D w_i}$, which is true since $p < 2^q \le S_{\xx,\beta} = \frac{\sum_{i=0}^D i w_i}{\sum_{i=0}^D w_i}$.

Thirdly we need $w'_i\le w_i$ for all $i < p$ and $w'_i\ge w_i$ for all $i \ge p$. To show this, note that
\begin{displaymath}
    w'_i \ge w_i
        \iff
    (-2^{q-1}-2^{\lfloor\log i - 1\rfloor}) \beta \ge -i \beta
        \iff
    2^{q-1}+2^{\lfloor\log i - 1\rfloor} \le i
        \enspace.
\end{displaymath}

When $i \le 2^{q-1}$, clearly $2^{q-1}+2^{\lfloor\log i - 1\rfloor}> i$, so $w'_i \le w_i$.

When $2^{q-1} < i < p$, $2^{q-1}+2^{\lfloor\log i - 1\rfloor} = 2^{q-1}+2^{q-2} = p > i$, so $w'_i \le w_i$.

When $p \le i < 2^q$, $2^{q-1}+2^{\lfloor\log i - 1\rfloor} = 2^{q-1}+2^{q-2} = p \le i$, so $w'_i \ge w_i$.

When $2^q \le i$, $2^{q-1}+2^{\lfloor\log i - 1\rfloor} \le 2^{q-1}+2^{\log i - 1} \le 2^{q-1} + \frac i2 \le i$, so $w'_i \ge w_i$.

Therefore we have all the necessary conditions to apply Claim \ref{cl:avg}, yielding $\frac{\sum_{i=0}^D i w'_i}{\sum_{i=0}^D w'_i} > p$. Then,
\begin{displaymath}
    S_{f(\xx),\beta}
        \ge
    \frac{\sum_{i=0}^D i w'_i}{4 \sum_{i=0}^D w'_i}
        >
    \frac{p}{4}
        \ge
    \frac{3q}{16}
        >
    \frac{3S_{\xx,\beta}}{32}
        >
    \frac{S_{\xx,\beta}}{11}
    \enspace.
\end{displaymath}

This completes the proof of Claim \ref{cl:avg}.
\end{proof}

%---------------------------------------------------------------------------------------------------------------------------------------------------------

\subsubsection{Bounding $S_{\xx,\beta}$ in terms of $S_{g(\xx),\beta}$}
\label{subsubsec:clustering-property-2nd-simplification-of-x}

Having applied $f$ to ensure that only power-of-2 coefficients of $\xx$ are non-zero (cf. Claim \ref{cl:avg}), we apply a second transformation to ensure that every such coefficient is at least half the previous one.

Let $g: \mathbb{R}^{D+1} \rightarrow \mathbb{R}^{D+1}$ be given by $g(\xx)_i = \begin{cases}\sum_{\ell \le i} \frac {\ell x_\ell}{i}  &\text{if } i = 2^k\text{ for some }k\in \mathbb{N}_0,\\ x_i  &\text{otherwise.}\end{cases}$

This definition achieves our aim since when $i$ is a power of $2$,
\begin{displaymath}
    2g(\xx)_{2i}
        =
    2 \sum_{\ell \le 2i} \frac{\ell x_\ell}{2i}
        =
    \sum_{\ell \le 2i} \frac{\ell x_\ell}{i}
        \ge
    \sum_{\ell \le i} \frac{\ell x_\ell}{i}
        =
    g(\xx)_{2i}
        \enspace.
\end{displaymath}

\begin{claim}
\label{cl:trans2}
For all $\xx \in \mathbb{N}_0^D$ which have $x_i = 0$ for all $i \notin \{2^k:k\in \mathbb{N}_0\} $, $S_{\xx,\beta} \le 2S_{g(\xx),\beta}$.
\end{claim}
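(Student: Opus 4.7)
My plan is to expand both $S_{\xx,\beta}$ and $S_{g(\xx),\beta}$ directly in terms of the non-zero coefficients $x_{2^j}$, swap the order of summation in the formulas for $g(\xx)$, and then compare numerators and denominators term by term in $j$. Under the assumption of the claim, the only non-zero entries are at powers of two, so I can write
\[
S_{\xx,\beta}
 = \frac{\sum_{j\ge 0} 2^j\, x_{2^j}\, e^{-2^j\beta}}{\sum_{j\ge 0} x_{2^j}\, e^{-2^j\beta}}.
\]
Plugging the definition $g(\xx)_{2^k} = \sum_{j\le k} 2^{j-k} x_{2^j}$ into $S_{g(\xx),\beta}$ and interchanging the two sums gives
\[
\mathrm{Num}(g(\xx)) = \sum_{j\ge 0} 2^j\, x_{2^j} \sum_{k\ge j} e^{-2^k\beta},
\qquad
\mathrm{Den}(g(\xx)) = \sum_{j\ge 0} 2^j\, x_{2^j} \sum_{k\ge j} 2^{-k}\, e^{-2^k\beta}.
\]

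The two key steps are then simple per-term comparisons. First, the numerator of $g(\xx)$ dominates that of $\xx$ term by term: the ratio of the $j$-th summands is $\sum_{k\ge j} e^{-2^k\beta}/e^{-2^j\beta} \ge 1$, so $\mathrm{Num}(g(\xx)) \ge \mathrm{Num}(\xx)$. Second, the denominator of $g(\xx)$ is at most twice that of $\xx$: the ratio of the $j$-th summands is
\[
\frac{2^j \sum_{k\ge j} 2^{-k} e^{-2^k\beta}}{e^{-2^j\beta}}
 = \sum_{k\ge j} 2^{j-k}\, e^{-(2^k-2^j)\beta}
 \;\le\; \sum_{k\ge j} 2^{j-k} = 2,
\]
using only that $e^{-(2^k-2^j)\beta} \le 1$ for $k\ge j$ and summing a geometric series. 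Since all $x_{2^j}$ are non-negative, this per-term bound lifts to $\mathrm{Den}(g(\xx)) \le 2\,\mathrm{Den}(\xx)$.

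Combining, $S_{g(\xx),\beta} = \mathrm{Num}(g(\xx))/\mathrm{Den}(g(\xx)) \ge \mathrm{Num}(\xx)/(2\,\mathrm{Den}(\xx)) = S_{\xx,\beta}/2$, which rearranges to the stated bound $S_{\xx,\beta}\le 2S_{g(\xx),\beta}$. I do not anticipate any serious obstacle: the one thing that requires a little care is the summation swap (making sure the indicator $\ell\le 2^k$ becomes $k\ge j$ when $\ell=2^j$), and verifying that the geometric series $\sum_{k\ge j}2^{j-k}=2$ is genuinely the right bound rather than something sharper we need. The intuitive picture is that $g$ spreads each unit of mass at index $2^j$ into a tail of copies at larger power-of-two indices weighted by $2^{j-k}$; this can only raise the weighted average of the index (the numerator grows) while inflating the normalization by at most the geometric factor $\sum_{k\ge j}2^{j-k}=2$.
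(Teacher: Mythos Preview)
Your proposal is correct and follows essentially the same route as the paper: expand $S_{g(\xx),\beta}$ over the power-of-two indices, swap the order of summation, lower-bound the numerator by keeping only the diagonal term $k=j$, and upper-bound the denominator via $e^{-(2^k-2^j)\beta}\le 1$ together with the geometric series $\sum_{k\ge j}2^{j-k}=2$. The paper performs the numerator bound before swapping (using $2^k g(\xx)_{2^k}\ge 2^k x_{2^k}$) rather than after, but this is the same inequality viewed from the other index; the denominator argument is identical.
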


\begin{proof}
\begin{align*}
    S_{g(\xx),\beta}
        &
        =
    \frac{\sum_{i=0}^D i g(\xx)_i e^{-i\beta}}{\sum_{i=0}^D g(x)_i e^{-i\beta}}
%    \\
%    &
        \ge
    \frac{\sum_{k=0}^{\log D} 2^k g(\xx)_{2^k} e^{-2^k\beta}}{\sum_{k=0}^{\log D} g(\xx)_{2^k} e^{-2^k\beta}}
%    \\
%    &
        \ge
    \frac{\sum_{k=0}^{\log D} 2^k x_{2^k} e^{-2^k\beta}}{\sum_{k=0}^{\log D} \sum_{\ell=0}^k \frac {2^\ell x_{2^\ell}}{2^k} e^{-2^k\beta}}
        \\
        &
        =
    \frac{\sum_{k=0}^{\log D} 2^k x_{2^k} e^{-2^k\beta}}{\sum_{\ell=0}^{\log D} \sum_{k=\ell}^{\log D} \frac {2^\ell x_{2^\ell}}{2^k} e^{-2^\ell \beta}}
%    \\
%    &
        \ge
    \frac{\sum_{k=0}^{\log D} 2^k  x_{2^k} e^{-2^k\beta}}{2\sum_{\ell=0}^{\log D} x_{2^\ell} e^{-2^\ell \beta}}
%    \\
%    &
        \ge
    \frac{S_{\xx,\beta}}{2}
    \enspace.
\qedhere
\end{align*}
\end{proof}

%---------------------------------------------------------------------------------------------------------------------------------------------------------

\subsection{Bounding $S_{\xx,\beta}$ for simplifed $\xx$}
\label{subsec:clustering-property-simplification-of-x-usage}

Now that we have shown in Claims \ref{cl:trans1} and \ref{cl:trans2} that the transformations $f$ and $g$ do not increase $S_{\xx,\beta}$ by more than a constant factor, we show how they help to bound the value of $S_{\xx,\beta}$. Let $\xx'$ be the vector obtained after applying the two transformations to $\xx$, i.e., $\xx' = g \circ f(\xx)$.

\begin{claim}\label{clm:xprop}
$\xx'$ has the following properties:
\begin{itemize}
\item $x'_i = 0$ for all $i \notin \{2^k:k\in \mathbb N_0\}$;
\item $x'_1 \ge 2$;
\item $||\xx'||_1 \le 2n$, ;
\item $2x'_{2i} \ge x'_i$ for all $i$, due to transformation $g$.
\end{itemize}
\end{claim}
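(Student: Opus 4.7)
My approach is to verify each of the four properties directly from the definitions of $f$ and $g$, since they are essentially mechanical unfoldings. Property (i) is immediate: $f$ assigns $0$ to every non-power-of-$2$ coordinate, and both versions of $g$ appearing in the paper return $0$ at non-power-of-$2$ indices whenever their input already vanishes there. The final property is the observation recorded in the text immediately below the definition of $g$: for $i$ a power of $2$, $2g(\yy)_{2i} = \sum_{\ell \le 2i} \ell \yy_\ell / i \ge \sum_{\ell \le i} \ell \yy_\ell / i = g(\yy)_i$, applied to $\yy = f(\xx)$; and for any positive $i$ not a power of $2$, $2i$ is also not a power of $2$, so both $\xx'_i$ and $\xx'_{2i}$ vanish by (i) and the inequality becomes $0 \ge 0$.

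For property (ii), unfolding definitions gives $\xx'_1 = g(f(\xx))_1 = \sum_{\ell \le 1} \ell\, f(\xx)_\ell = f(\xx)_1 = x_2 + x_3$, because only $\ell=1$ survives the factor of $\ell$ and $f(\xx)_1 = \sum_{m=2}^{3} x_m$. To bound this below by $2$, I would invoke the standing assumption $D = \Omega(\log^c n)$: every node $v$ in a graph of diameter $D$ has eccentricity at least $D/2$, which comfortably exceeds $3$ under the standing hypothesis, so a shortest path from $v$ to a farthest node contains vertices at distances exactly $2$ and exactly $3$ from $v$, giving $x_2 \ge 1$, $x_3 \ge 1$, and therefore $x_2 + x_3 \ge 2$.

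For property (iii), I would first bound $\|f(\xx)\|_1 = \sum_{k \ge 0} \sum_{\ell = 2^{k+1}}^{2^{k+2}-1} x_\ell \le \sum_{\ell \ge 2} x_\ell \le n$. Swapping the order of summation in the definition of $g$ then yields
\[
\|g(f(\xx))\|_1
  = \sum_{k \ge 0} \sum_{\ell \le 2^k} \frac{\ell\, f(\xx)_\ell}{2^k}
  = \sum_{\ell \ge 1} \ell\, f(\xx)_\ell \sum_{k\,:\,2^k \ge \ell} 2^{-k}
  \le \sum_{\ell \ge 1} \ell\, f(\xx)_\ell \cdot \frac{2}{\ell}
  = 2\,\|f(\xx)\|_1
  \le 2n,
\]
using that the inner geometric sum is at most $2/\ell$ for each $\ell \ge 1$. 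The only step that appeals to anything beyond algebraic manipulation is the eccentricity argument supporting property (ii), so that is the one spot to keep an eye on; everything else is bookkeeping.
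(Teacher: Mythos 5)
Your proof is correct and follows essentially the same route as the paper's (which merely asserts that $f$ zeroes non-powers of $2$, that $x'_1 \ge f(\xx)_1 = x_2 + x_3 \ge 2$, that $f$ preserves and $g$ at most doubles the $L_1$-norm, and that the fourth property follows from $g$); your computations are just the explicit unfoldings of those assertions. The one detail you supply that the paper leaves implicit is the justification of $x_2 + x_3 \ge 2$ via the eccentricity bound $\mathrm{ecc}(v) \ge D/2 \ge 3$, which is valid under the standing assumption that $D$ is polylogarithmically large.
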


\begin{proof}
The first property is obvious due to transformation $f$. The second is true since $x'_1 \ge f(x)_1 = x_2+x_3 \ge 2$. The third is the case since $f$ does not increase $L_1$-norm and $g$ at most doubles it, and the fourth follows from transformation $g$.
\end{proof}

Our argument will be based on examining the ratios between consecutive non-zero coefficients in $\xx'$. To that end, define $k_i = \log \frac{x'_{2^{i+1}}} {x'_{2^i}}$ for all $0.01 \log D \le i \le 0.1 \log D$, and note that $k_i \ge \log \frac 12 = -1$ for all $i$ and $\sum_{i=0}^{\log D} k_i \le \log n$ due to the above properties.

We first show a condition on these $k_i$ which guarantees that $S_{\xx',\beta}$ (and therefore $S_{\xx,\beta}$) is $O(\frac{\log n}{\beta\log D})$ for some particular value of $\beta$:

\begin{claim}
\label{cl:goodj}
If for fixed $j$ and for all $m \ge 8$ we have $\sum\limits_{\ell = j + \log{\frac{\log n}{\log D}}}^{j + \log{\frac{\log n}{\log D}} + m} k_\ell \le 2^m \frac{\log n}{\log D}$, then
% OLD: $S_{\xx',2^{-j}} \le 258 \frac{2^j \log n}{\log D}$.
$S_{\xx',2^{-j}} = O(\frac{2^j \log n}{\log D})$.
\end{claim}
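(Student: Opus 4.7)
Plan: Set $\beta = 2^{-j}$, write $k^* := j + L$ with $L := \log(\log n/\log D)$, and abbreviate $a_k := x'_{2^k}\, e^{-2^{k-j}}$. Since $x'_i$ vanishes off powers of two,
$$S_{\xx',2^{-j}} \;=\; \frac{\sum_{k=0}^{\log D} 2^k a_k}{\sum_{k=0}^{\log D} a_k} \;=\; \frac{N}{D_\beta}.$$
The target bound $O(2^j \log n/\log D) = O(2^{k^*})$ reduces to showing $N = O(2^{k^*})\, D_\beta$, which I would prove by splitting the numerator at $k = k^* + 9$. For the low range $k \le k^* + 8$, simply pulling out $2^k \le 2^{k^*+8}$ gives $\sum_{k \le k^*+8} 2^k a_k \le 2^{k^*+8}\, D_\beta = 256 \cdot 2^{k^*}\, D_\beta$, which is already of the required order.

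For the tail $k \ge k^* + 9$, set $r = k - k^*$ and apply the hypothesis with $m = r - 1 \ge 8$ to control the growth of $x'_{2^{k^*+r}}$:
$$x'_{2^{k^*+r}} \;=\; x'_{2^{k^*}} \cdot 2^{\sum_{\ell=k^*}^{k^*+r-1} k_\ell} \;\le\; x'_{2^{k^*}} \cdot 2^{2^{L+r-1}}.$$
Substituting into $a_{k^*+r} = x'_{2^{k^*+r}}\, e^{-2^{L+r}}$ and combining the exponents yields
$$a_{k^*+r} \;\le\; x'_{2^{k^*}}\, e^{-(2-\ln 2)\,2^{L+r-1}},$$
and therefore
$$\sum_{r \ge 9} 2^{k^*+r} a_{k^*+r} \;\le\; 2^{k^*}\, x'_{2^{k^*}} \sum_{r \ge 9} 2^r\, e^{-(2-\ln 2)\,2^{L+r-1}}.$$
Since $D \le n$ forces $L \ge 0$, the ratios of consecutive terms in this last series are $2\, e^{-(2-\ln 2)\,2^{L+r-1}}$, which are doubly exponentially small for $r \ge 9$, so the whole series is bounded by a small multiple of its first term $2^9\, e^{-(2-\ln 2)\,2^{L+8}}$.

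To convert $x'_{2^{k^*}}$ back into $D_\beta$ I would use the trivial lower bound $D_\beta \ge a_{k^*} = x'_{2^{k^*}} e^{-2^L}$, that is, $x'_{2^{k^*}} \le e^{2^L}\, D_\beta$. Thus the overall multiplicative constant applied to $2^{k^*}\, D_\beta$ in the tail estimate is at most
$$e^{2^L}\cdot 2^9\, e^{-(2-\ln 2)\,2^{L+8}} \;=\; 2^9\, e^{-((2-\ln 2)\cdot 256 - 1)\,2^L} \;\le\; 2^9\, e^{-331},$$
using $2 - \ln 2 \ge 1.3$ and $L \ge 0$. Adding the low and tail bounds gives $N = O(2^{k^*})\, D_\beta$, hence $S_{\xx',2^{-j}} = O(2^{k^*}) = O(2^j \log n/\log D)$ as claimed.

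The main obstacle is the ``middle'' strip $k^* < k < k^* + 9$, where the hypothesis is not yet applicable and the cheap bound $2^k \le 2^{k^*}$ fails; it is simply absorbed into the low range at the cost of the constant factor $2^8$. The genuine work is in the tail, where one must verify that the exponential cooling $e^{-2^{L+r}}$ comfortably beats the adversarial growth $2^{2^{L+r-1}}$ permitted by the hypothesis, with enough slack left over to also absorb the blow-up factor $e^{2^L}$ that arises when one normalizes $x'_{2^{k^*}}$ against $D_\beta$.
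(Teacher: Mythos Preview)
Your argument is correct and follows the same outline as the paper's proof: split the numerator at $k=k^*+8$, bound the low part trivially via $2^k\le 256\cdot 2^{k^*}$, and in the tail use the hypothesis to show that the permitted growth $2^{2^{L+r-1}}$ of $x'_{2^{k^*+r}}$ is defeated by the decay $e^{-2^{L+r}}$, then normalize $x'_{2^{k^*}}$ against $B_{\xx',2^{-j}}$ using the single term $a_{k^*}$. The only difference is that the paper makes a three-way split $T_{\xx',2^{-j}}=P+Q+R$, handling the far range $k>j+\log\log n$ separately via the crude mass bound $||\xx'||_1\le 2n$ rather than the hypothesis; your two-part split avoids this extra case and is a minor streamlining of the same idea.
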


\begin{proof}
We first split $T_{\xx',2^{-j}}$ (the numerator of $S_{\xx',2^{-j}}$) into three parts, which we will bound separately (recall that $B_{\xx',2^{-j}} = \sum_{i=0}^D x'_{i} e^{-i2^{-j}}$):

\begin{align*}
    T_{\xx',2^{-j}}
%    &
        =
    \sum_{i=0}^D i x'_i e^{-i2^{-j}}
%    \\
%    &
        =
    \sum_{i=0}^{\log D} 2^i x'_{2^i} e^{-2^{i-j}}
%    \\
%    &
    = P+Q+R
        \enspace,
\end{align*}

where $P = \sum\limits_{i=0}^{j+\log{\frac{\log n}{\log D}}+8} 2^i x'_{2^i} e^{-2^{i-j}}$, $Q = \sum\limits_{i=j+\log{\frac{\log n}{\log D}}+9}^{j+\log\log n} 2^i x'_{2^i} e^{-2^{i-j}}$, and $R = \sum\limits_{i=\log\log n+1}^{\log D} 2^i x'_{2^i} e^{-2^{i-j}}$.

We now bound these parts. $P$ is the largest, and we require that $P = O(\frac{2^j\log n}{\log D})B_{\xx',2^{-j}}$ .

\begin{align*}
    P
        &
        =
    \sum_{i=0}^{j+\log{\frac{\log n}{\log D}}+8} 2^i x'_{2^i} e^{-2^{i-j}}
%    \\
%    &
        \le
    \sum_{i=0}^{j+\log{\frac{\log n}{\log D}}+8} 256\frac{2^j\log n}{\log D} x'_{2^i} e^{-2^{i-j}}
        \\
        &
        \le
    256\frac{2^j\log n}{\log D}\sum_{i=0}^{\log D}  x'_{2^i} e^{-2^{i-j}}
%    \\
%    &
        =
    256\frac{2^j\log n}{\log D}B_{\xx',2^{-j}}
        \enspace.
\end{align*}

Using the condition of Claim \ref{cl:goodj}, we can show that $Q$ is also $O(\frac{2^j\log n}{\log D})B_{\xx',2^{-j}}$. Let $m \ge 9$. We begin by re-expressing $x'_{\frac{2^{j+m}\log n}{\log D}}$:
\begin{displaymath}
    x'_{\frac{2^{j+m}\log n}{\log D}}
        =
    x'_{\frac{2^{j}\log n}{\log D}}
    \prod_{\ell=j+\log{\frac{\log n}{\log D}}}^{j+\log{\frac{\log n}{\log D}}+m-1} 
        \frac{x'_{2^{\ell+1}}}{x'_{2^\ell}}
        =
    x'_{\frac{2^{j}\log n}{\log D}}
        2^{\sum\limits_{\ell=j+\log{\frac{\log n}{\log D}}}^{j+\log{\frac{\log n}{\log D}}+m-1}k_\ell}
        \enspace.
\end{displaymath}

We can then apply the condition of the claim:
\begin{align*}
    x'_{\frac{2^{j+m}\log n}{\log D}}
        &\le
    x'_{\frac{2^{j}\log n}{\log D}} 2^{2^{m-1}\frac{\log n}{\log D}}
        \le
    e^{\frac{2^{j}\log n}{\log D}2^{-j}} 2^{2^{m-1}\frac{\log n}{\log D}} x'_{\frac{2^{j}\log n}{\log D}} e^{-\frac{2^{j}\log n}{\log D}2^{-j}}
        \\
        &\le
    e^{\frac{\log n}{\log D}} 2^{2^{m-1}\frac{\log n}{\log D}}\sum_{i=0}^{D}  x'_{i} e^{-i2^{-j}}
        =
    2^{(2^{m-1}+\log e)\frac{\log n}{\log D}}B_{\xx',2^{-j}}
        \enspace.
\end{align*}

We can use this to bound $Q$ as follows,
\begin{align*}
    Q
        &=
    \sum_{i=j+\log{\frac{\log n}{\log D}}+9}^{j+\log\log n} 2^i x'_{2^i} e^{-2^{i-j}}
        =
    \frac{2^{j}\log n}{\log D}\sum_{m=9}^{\log\log n} 2^m x'_{\frac{2^{j+m}\log n}{\log D}} e^{-2^{m+\log{\frac{\log n}{\log D}}}}
        \\
        &\le
	\frac{2^j\log n}{\log D}\sum_{m=9}^{\log\log n} 2^m \cdot 2^{(2^{m-1}+\log e)\frac{\log n}{\log D}}B_{\xx',2^{-j}} \cdot e^{-2^{m+\log{\frac{\log n}{\log D}}}}
        \enspace.
\end{align*}

Rearranging terms, we obtain,
\begin{align*}
	Q
        &=
	\frac{2^{j}\log n}{\log D}B_{\xx',2^{-j}}\sum_{m=9}^{\log\log n} 2^{m+(2^{m-1}+\log e)\frac{\log n}{\log D}-2^{m}\frac{\log n}{\log D}}
        \\
        &\le
	\frac{2^{j}\log n}{\log D}B_{\xx',2^{-j}}\sum_{m=9}^{\log\log n} 2^{-2^{m-2}\frac{\log n}{\log D}}
        \le
	\frac{2^{j}\log n}{\log D}B_{\xx',2^{-j}}
\enspace.
\end{align*}

$R$ is always negligible, since the $e^{-2^{i-j}}$ term is very small for large $i$.
\begin{align*}
    R
%    &
        =
    \sum_{i=j+\log\log n+1}^{\log D} 2^i x'_{2^i} e^{-2^{i-j}}
%    \\
%    &
        \le
    \sum_{i=j+\log\log n+1}^{\log D} D x'_{2^i} e^{-2\log n}
%    \\
%    &
        \le
    2Dn^{1-2\log e}
%    \\
%    &
        \le
    1
    \enspace.
\end{align*}

So,
\begin{align*}
    S_{\xx',2^{-j}}
        &=
    \frac{P+Q+R}{B_{\xx',2^{-j}} }
%    \\
%    &
        \le
    \frac{256\frac{2^j\log n}{\log D}B_{\xx',2^{-j}} +\frac{2^{j}\log n}{\log D}B_{\xx',2^{-j}} +1}{B_{\xx',2^{-j}}}
        \le
    258\frac{2^j\log n}{\log D}
    \enspace.
    \qedhere
\end{align*}
\end{proof}

Finally, we must show that there are many $j$ for which the condition of Claim \ref{cl:goodj} holds.

\begin{claim}
\label{cl:goodj-cond}
The number of integers $j$, $0.01\log D \le j \le 0.1\log D$, for which there is $i \ge 8$ satisfying $\sum_{\ell=j+\log{\frac{\log n}{\log D}}}^{j+\log{\frac{\log n}{\log D}}+i}k_\ell > 2^{i}\frac{\log n}{\log D}$ is upper bounded by $0.04\log D$.
%OLD STATEMENT
%$|\{j\in [0.01\log D,0.1\log D]:\exists i\ge 8\text{ such that } \sum\limits_{\ell=j+\log{\frac{\log n}{\log D}}}^{j+\log{\frac{\log n}{\log D}}+i}k_\ell > 2^{i}\frac{\log n}{\log D}\}|\le 0.04\log D$.
\end{claim}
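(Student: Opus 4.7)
The plan is to prove Claim \ref{cl:goodj-cond} by a counting argument that exploits two facts from Claim \ref{clm:xprop}: the global budget $\sum_\ell k_\ell \le \log n$ (from $x'_1 \ge 2$ and $\|x'\|_1 \le 2n$) and the local lower bound $k_\ell \ge -1$ at each of the at most $\log D$ indices where $k_\ell$ is defined. Together these give $\sum_\ell k_\ell^+ \le \log n + \log D = (1+o(1))\log n$ under the standing assumption $D = \Omega(\log^c n)$.

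For each bad $j$ (one for which some $i \ge 8$ makes the displayed inequality hold), I would set $a_j = j + \log\frac{\log n}{\log D}$, assign the \emph{smallest} valid witness $i_j$, and partition the bad set $B$ into classes $B_m = \{j \in B : i_j = m\}$ for $m \ge 8$. The bound on $|B|$ then follows from bounding each $|B_m|$ and summing.

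For $m > 8$, the key is to exploit minimality: since $i = m{-}1$ fails the bad condition, $\sum_{\ell=a_j}^{a_j+m-1} k_\ell \le 2^{m-1}\frac{\log n}{\log D}$, and subtracting this from the $i=m$ inequality isolates $k_{a_j+m} > 2^{m-1}\frac{\log n}{\log D}$. Since distinct $j \in B_m$ give distinct indices $a_j + m$, $|B_m|$ is bounded by the number of $\ell$ at which $k_\ell$ exceeds that threshold; the $O(\log n)$ budget on $\sum_\ell k_\ell^+$ yields $|B_m| = O(\log D / 2^m)$, decaying geometrically in $m$. For the base class $B_8$, where minimality gives no leverage, I would instead order $B_8$ by $a_j$ and greedily pick every ninth element, producing $\Omega(|B_8|/9)$ pairwise disjoint length-$9$ intervals $I_j = [a_j, a_j+8]$ each contributing more than $256\cdot \log n/\log D$ to $\sum_\ell k_\ell^+$; the same budget then forces $|B_8| = O(\log D / 2^8)$.

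The hard part will be pinning down the constants tightly enough to reach $0.04\log D$ rather than merely $O(\log D)$. The dominant term is $|B_8|$: the basic greedy argument above gives only $|B_8| \le 18\log D / 2^8 \approx 0.07\log D$, which already exceeds the target on its own. Pushing below $0.04\log D$ will likely require either a sharper mass-assignment scheme for $B_8$ (for instance, a fractional cover that exploits how overlapping bad intervals share their positive mass) or a more aggressive use of $D = \Omega(\log^c n)$ to tighten $\sum_\ell k_\ell^+$ all the way to $(1+o(1))\log n$ and render the $\log D$ slack negligible. Once $|B_8|$ is controlled, the tail $\sum_{m\ge 9}|B_m| = O(\log D) \cdot \sum_{m\ge 9}(m+1)/2^m$ contributes only a geometrically small amount, and the claimed bound follows.
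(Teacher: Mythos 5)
You have correctly identified the gap yourself, and it is a real one: as written, your argument yields only about $0.07\log D$ from $B_8$ plus roughly $0.016\log D$ from the tail classes, i.e.\ about $0.086\log D$, whereas the constant $0.04$ is genuinely needed (the proof of Theorem \ref{thm:cprop} uses $1-\frac{0.04}{0.1-0.01}\ge 0.55$). Neither of your proposed rescues works as stated. The appeal to $D=\Omega(\log^c n)$ is a dead end: that is a \emph{lower} bound on $D$, so it does not make $\log D = o(\log n)$; for $D=\sqrt n$ one has $\log n+\log D = 1.5\log n$, and in general your budget $\sum_\ell k_\ell^+ \le \log n + \log D$ can be as large as $2\log n$. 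The ``sharper mass-assignment'' idea also cannot rescue $B_8$ by itself: a fractional-cover relaxation of your greedy gives exactly the same inequality $\frac{|B_8|}{9}\cdot 256\frac{\log n}{\log D}\le \sum_\ell k_\ell^+$, so the loss is entirely in the budget, not in the covering scheme.

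The paper closes the constant differently. It makes a single left-to-right greedy pass over $[0.01\log D,\,0.1\log D]$: at each uncovered position $i$, if some witness $i'\ge i+8$ exists it marks all of $[i,i']$ (with $i'$ the largest witness) bad and resumes at $i'+1$. Every $j$ in the claimed set has its shifted index marked bad, and the bad indices decompose into disjoint intervals over each of which the \emph{average} of $k_\ell$ exceeds $\frac{2^m}{m+1}\frac{\log n}{\log D}\ge\frac{256}{9}\frac{\log n}{\log D}$. Crucially, the budget inequality is then written against $\sum_\ell k_\ell\le\log n$ directly rather than against $\sum_\ell k_\ell^+$: the $b$ bad indices contribute at least $\frac{256}{9}\frac{\log n}{\log D}\,b$ and the at most $0.09\log D-b$ remaining indices of the window contribute at least $-1$ each, giving $\frac{256}{9}\frac{\log n}{\log D}b-(0.09\log D-b)\le\log n$ and hence $b\le\frac{9\cdot 1.09}{256}\log D<0.04\log D$. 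The two ingredients you are missing are therefore: (i) the negative mass is charged only over the $0.09\log D$ indices of the relevant window, so the effective positive budget is $1.09\log n$, not $2\log n$; and (ii) one disjoint interval family covering all bad $j$ at once, rather than paying separately for $B_8$ and each $B_m$ --- even after correcting the budget, your decomposition totals roughly $0.038\log D + 0.009\log D > 0.04\log D$ unless you additionally argue that the $B_8$ intervals and the $B_m$ spike indices draw on a common budget, which requires handling their possible overlaps. Your minimal-witness decomposition is a legitimate alternative skeleton, but it needs both fixes to reach the stated constant.
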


\begin{proof}
Consider the following process: take values of $i\in [0.01\log D,0.1\log D]$ in increasing order. If there is some $i'\ge i+8$ such that $\sum_{\ell=i}^{i'}k_\ell > 2^{i'-i}\frac{\log n}{\log D}$, then call all values between $i$ and the largest such $i'$ `bad', and continue the process from $i'+1$. Let $b$ denote the number of bad $i$. The average $k_i$ over all bad $i$ must be at least $\frac{2^8\log n}{9 \log D}$, and since all $k_i$ are bounded below by $-1$ and sum to at most $\log n$, we have
\begin{displaymath}
    \frac{2^8\log n}{9 \log D}b+(-1)(0.09\log D - b) \le \log n
        \enspace,
\end{displaymath}
and so $b \le \frac{\log n+0.09\log D}{\frac{2^8\log n}{9 \log D}+1}\le \frac{1.09\log n}{\frac{2^8\log n}{9 \log D}}\le 0.04\log D$. For every $j$ in the set, $j+\log{\frac{\log n}{\log D}}$ must be bad, and so the size of the set is also at most $0.04\log D$.
\end{proof}

We are now ready to prove our main result, Theorem \ref{thm:cprop}.

\begin{proof}[Proof of Theorem \ref{thm:cprop}]
With probability at least $1-\frac{0.04}{0.1-0.01} \ge 0.55$, for all $i\ge 8$ we have that
\begin{displaymath}
    \sum_{\ell=j+\log{\frac{\log n}{\log D}}}^{j+\log{\frac{\log n}{\log D}}+i} k_\ell
        \le 
    2^{i}\frac{\log n}{\log D}
        \enspace.
\end{displaymath}
Then, $S_{\xx',2^{-j}} = O(\frac{2^j\log n}{\log D})$ by Claims \ref{cl:goodj} and \ref{cl:goodj-cond}. Applying Claims \ref{cl:trans1} and \ref{cl:trans2}, we get $S_{\xx,2^{-j}} = O(\frac{2^j\log n}{\log D})$. Finally, applying Lemma \ref{lem:edist}, we find that the expected distance from $v$ to its cluster center is at most $O(\frac{2^j\log n}{\log D})$.
% OLD PROOF with constants
%With probability at least $1-\frac{0.04}{0.1-0.01} \ge 0.55$, for all $i\ge 8$ we have that $\sum\limits_{\ell=j+\log{\frac{\log n}{\log D}}}^{j+\log{\frac{\log n}{\log D}}+i}k_\ell \le 2^{i}\frac{\log n}{\log D}$. Then, by Claim \ref{cl:goodj}, $S_{\xx',2^{-j}}\le 258\frac{2^j\log n}{\log D}$. Applying Claims \ref{cl:trans1} and \ref{cl:trans2} we get $S_{\xx,2^{-j}}\le 8514\frac{2^j\log n}{\log D}$. Finally, applying Lemma \ref{lem:edist}, we find that the expected distance from $v$ to its cluster center is at most $42570\frac{2^j\log n}{\log D}$.
\end{proof}

\end{document}